\documentclass[11pt]{article}\textwidth 6.5in\textheight 51pc
\usepackage{amssymb}\usepackage[colorlinks]{hyperref}\usepackage{color}
\usepackage{mathrsfs}\usepackage[normalem]{ulem}\usepackage{amsthm}
\usepackage{amsmath}
\usepackage{amssymb}
\usepackage{verbatim}
\usepackage{braket}
\usepackage{setspace}
\usepackage{hyperref}

	\newtheorem{thr}{Theorem} 
\newtheorem{lmm}{Lemma}
\newtheorem{cor}{Corollary}
 
\newtheorem{prp}{Proposition}

\newtheorem{dff}{Definition}

\newtheorem{clm}{Claim}
\newtheorem{exm}{Example}

\topmargin -3pc\oddsidemargin 0in\evensidemargin 0in

\newcommand\ceil[1]{{\lceil#1\rceil}}

\newcommand{\lea}{<^{+}}
\newcommand{\gea}{>^{+}}
\newcommand{\eqa}{=^{+}}
\newcommand{\lem}{<^{\ast}}
\newcommand{\gem}{>^{\ast}}
\newcommand{\eqm}{=^{\ast}}

\newcommand\Q{\mathbb{Q}}
\newcommand\R{\mathbb{R}}

\newcommand\N{\mathbb{N}}
\newcommand\BT{\{0,1\}}
\newcommand\FS{\BT^*}
\newcommand\IS{\BT^\infty}
\newcommand\FIS{\BT^{*\infty}}

\newcommand\ii{\mathbf{i}}
\newcommand\om{\overline{\m}}
\newcommand\oI{\overline{\I}}

\newcommand\K{{\mathbf K}} 

\newcommand\I{{\mathbf I}}

\newcommand\m{{\mathbf m}}

\begin{document}

\newcommand{\Tr}{\mathrm{Tr}}
\author {Samuel Epstein\footnote{JP Theory Group. samepst@jptheorygroup.org}}
\title{\vspace*{-3pc} On the Algorithmic Information Between Probabilities} \date{\today}\maketitle

\begin{abstract}
	We extend algorithmic conservation inequalities to probability measures. The amount of self information of a probability measure cannot increase when submitted to randomized processing. This includes (potentially non-computable) measures over finite sequences, infinite sequences, and $T_0$, second countable topologies. One example is the convolution of signals over real numbers with probability kernels. Thus the smoothing of any signal due  We show that given a quantum measurement, for an overwhelming majority of pure states, no meaningful information is produced. 
\end{abstract}
\tableofcontents
\section{Introduction}
We prove conservation of probabilities over successively general spaces. This includeds finite sequences, infinite sequences, and $T_0$, second countable topologies. Conservation of probabilities over the case of finite and infinite sequences follow directly from conservation inequalities over random processing in individual sequences \cite{Levin84,Levin74,Vereshchagin21,Gacs21}. However there is benefit in revisiting these results in the context of manipulations of probabilities. Probabilities are ubiquitious in mathematics, such as the result of quantum measurements, as detailed in this paper. This is particular true when the results are generalized to arbitrary topologies. Information between probability measures is achieved through a mapping from the general topology to infinite sequences and then applying the information function between individual sequences. We use the set of reals as an example and then show conservation of information over computable convolutions. One example is the smoothing of a signal due to a Gaussian function, which results in degradation of self algorithmic information. 

We also show how to lower bound information between probabilities over general spaces with information between probabilities over finite sequences using uniformly enumerable disjoint open sets. We provide an means to upper bound the probabilities between general spaces using computable non-probabilistic measure covers. We look at the average information between measures by using probability measures over spaces of measures.

The advantage to the topological approach used in this paper is that a very general topology can be used. The only assumption needed is that the topology needs to have the $T_0$ property and a computable countable basis. Typical requirements in computability theory such as compactness or metrizability are not needed. In addition this work deals with all measures, not just computable ones. This is analogous to how the mutual information term between infinite sequences is well defined over uncomputable inputs. 

\begin{figure}[h!]
	\begin{center}
\begin{tabular}{|c|l|}
\hline
$\i(x:y)$ & Information between finite sequences\\
\hline
$\ii(p:q)$ & Information between probabilities over finite sequences\\
\hline
$\I(\alpha:\beta)$ & Information between infinite sequences\\
\hline
$\I(P:Q)$ & Information between probabilities over infinite sequences\\
\hline
$\mathcal{I}(\mathcal{P}:\mathcal{Q})$ & Information between probabilities over general spaces\\
\hline
\end{tabular}
\caption{Information terms.}
\end{center}
\end{figure}
\vspace*{-1.0cm}
\section{Probabilities over Sequences}
The function $\K(x|y)$ is the conditional prefix free Kolmogorov complexity. The algorithmic probability is $\m(x|y)$. The mutual information of two finite sequences is $\i(x:y)=\K(x)+\K(y)-\K(x,y)$. $[A]=1$ if the mathematical statement $A$ is true. Otherwise $[A]=0$. Let $\langle x\rangle = 1^{\|x\|}0x$ be a self delimiting encoding of $x$. 

\begin{dff}[Information, Discrete Semi-Measures]$ $\\
	For semi-measures $p$ and $q$ over $\FS$, $\ii(p:q)=\log\sum_{x,y\in\FS}2^{\i(x:y)}p(x)q(y)$.
\end{dff}
The previous definition also applies to semi-measures over $\N$.
\begin{lmm}[\cite{Levin84}]
	\label{lmm:consfunc}
	For partial recursive function $f:\FS\rightarrow\FS$, $\i(f(x):y)\lea \i(x:y)+\K(f)$.
\end{lmm}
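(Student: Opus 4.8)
The plan is to reduce the symmetric statement to a one‑sided conditional inequality and then establish that inequality by a maximality argument for the universal algorithmic probability $\m$. The first step is the chain rule $\K(a,y)\eqa\K(a)+\K(y\mid a^*)$, where $a^*$ is a shortest program for $a$; this rewrites the information as $\i(a:y)\eqa\K(y)-\K(y\mid a^*)$. Applying the identity with $a=x$ and with $a=f(x)$ and subtracting, the common term $\K(y)$ cancels and the lemma becomes equivalent to the single inequality $\K(y\mid x^*)\lea\K(y\mid (f(x))^*)+\K(f)$. Intuitively: a description of $y$ from a shortest program for $x$ is never much longer than one from a shortest program for $f(x)$, once we pay the cost $\K(f)$.

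To establish this conditional inequality I would pass to algorithmic probability via the coding theorem $\K(y\mid w)\eqa-\log\m(y\mid w)$, so that the target becomes the multiplicative bound $\m(y\mid x^*)\gem 2^{-\K(f)}\m(y\mid (f(x))^*)$. The idea is to build, uniformly from an index for $f$, a lower‑semicomputable conditional semimeasure $\nu(\cdot\mid c)$ that on the condition $c=x^*$ reproduces the mass that $\m(\cdot\mid (f(x))^*)$ places on $y$: given $c$, run it to recover $x$, compute $z=f(x)$, and transfer the universal conditional measure at $z$ onto the variable $y$. Maximality of $\m(\cdot\mid\cdot)$ then yields $\m(y\mid x^*)\gem 2^{-\K(\nu)}\nu(y\mid x^*)$, and since $\nu$ depends only on $f$ we have $\K(\nu)\lea\K(f)$, producing the factor $2^{-\K(f)}$. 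I would also record the easy companion facts $\m(f(x))\gem 2^{-\K(f)}\m(x)$ and $\m(f(x),y)\gem 2^{-\K(f)}\m(x,y)$, obtained by pushing $\m$ forward along the computable maps $x\mapsto f(x)$ and $(x,y)\mapsto(f(x),y)$ and summing over preimages; these pin down the direction of the bound and make clear that the genuine content lies in the opposite, information‑collapsing direction, where the two naive term‑by‑term splittings each fail and a true cancellation is required.

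The hard part is precisely the conditioning on the shortest program $(f(x))^*$. From $x^*$ and $f$ one computes $f(x)$ itself at cost $O(1)$, but invoking $\m(\cdot\mid (f(x))^*)$ needs the pair $(f(x),\K(f(x)))$, and $\K(f(x))$ is only upper semicomputable, so the transfer map above is not literally lower semicomputable; a naive sum over candidate values of $\K(f(x))$ leaks a logarithmic term that the additive‑constant relation $\lea$ forbids. The resolution is the Gács–Levin enumeration (Kraft–Chaitin) bookkeeping underlying the chain rule itself: one enumerates descriptions and accounts for the monotone approximations to $\K(f(x))$ with a Kraft sum that stays bounded, so that shortest‑program conditioning is handled implicitly by the universal machine rather than reconstructed explicitly, and no logarithmic overhead appears. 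This is exactly the machinery packaged in the cited conservation inequality \cite{Levin84}, which I would invoke to close the argument.
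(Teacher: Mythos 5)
The paper does not actually prove this lemma --- it is stated with a bare citation to \cite{Levin84}, and only the randomized analogue (Lemma \ref{lem:consran}) is proved in the text. So your proposal must stand on its own, and as written it has a genuine gap at exactly the point you flag. Your reduction via the chain rule to $\K(y\mid x^*)\lea \K(y\mid (f(x))^*)+\K(f)$ is correct and standard, and your diagnosis of the obstruction is accurate: the transfer semimeasure $\nu(\cdot\mid x^*)$ needs $(f(x))^*$, hence $\K(f(x))$, which is not computable from $x^*$ and $f$; conditioning on $f(x)$ alone gives only the weaker bound $\K(y\mid x^*)\lea\K(y\mid f(x))+\K(f)$, and summing over candidate values of $\K(f(x))$ leaks a logarithm. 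But your resolution of this obstruction is to ``invoke the machinery packaged in the cited conservation inequality \cite{Levin84}'' --- that machinery \emph{is} the statement being proven, so the argument is circular. Identifying where the difficulty lives is not the same as resolving it; the Kraft--Chaitin bookkeeping you allude to has to be carried out, not cited.

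Note also that the natural escape route inside this paper does not close the gap for free: specializing Lemma \ref{lem:consran} to the deterministic kernel $\psi_a(c)=[c=f(a)]$ (which is semi-computable relative to $a$, with $\m(\psi)\gem 2^{-\K(f)}$) yields $\i(\langle x,f(x)\rangle:y)\lea\i(x:y)+\K(f)$, but passing from $\i(\langle x,f(x)\rangle:y)$ down to $\i(f(x):y)$ is itself deterministic conservation under a projection, i.e.\ another instance of the lemma; unwinding it via the chain rule leads back to comparing $\K(\cdot\mid v^*)$ with $\K(\cdot\mid (v,y)^*)$, the same shortest-program-conditioning issue. A complete proof must either reproduce Levin's actual construction (as the paper does for Proposition 2 of \cite{Levin84} in its proof of Lemma \ref{lem:consran}) or give the monotonicity step an independent argument. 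Your ``companion facts'' $\m(f(x))\gem 2^{-\K(f)}\m(x)$ and $\m(f(x),y)\gem 2^{-\K(f)}\m(x,y)$ are fine but, as you say yourself, point in the unneeded direction.
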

\begin{lmm}
	\label{lem:consran}
	Let $\psi_a$ be an enumerable semi-measure, semi-computable relative to $a$.\\ 
	$\sum_c2^{\i(\langle a,c\rangle:b)}\psi_a(c)\lem 2^{\i(a:b)}/\m(\psi)$.
\end{lmm}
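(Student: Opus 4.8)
The plan is to reduce everything to the maximality (universality) of the a priori semimeasure $\m$, together with the exponential form of symmetry of information. First I would record the identity
$$2^{\i(u:b)}\eqm\frac{\m(u\mid b^*)}{\m(u)},$$
valid up to a multiplicative constant uniform in $u$ and $b$, where $b^*$ denotes a shortest program for $b$; this follows from $\i(u:b)\eqa\K(u)-\K(u\mid b^*)$ (symmetry of information) and the coding theorem $\m\eqm 2^{-\K}$. Substituting $u=\langle a,c\rangle$ and pulling the uniform constant out of the sum gives
$$\sum_c2^{\i(\langle a,c\rangle:b)}\psi_a(c)\eqm\sum_c\frac{\m(\langle a,c\rangle\mid b^*)}{\m(\langle a,c\rangle)}\,\psi_a(c).$$

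The two remaining ingredients are both instances of the maximality of $\m$. The first controls the denominator: I would define $\nu(\langle a,c\rangle)=\m(a)\psi_a(c)$ (and $\nu=0$ off the valid pair encodings). Since $\psi_a(c)$ is enumerable from below uniformly in $a$ given a description of $\psi$, the function $\nu$ is a lower semicomputable semimeasure, indeed $\sum_{a,c}\m(a)\psi_a(c)\le\sum_a\m(a)\le 1$, and it is computable from $\psi$. Hence $\m(\nu)\gem\m(\psi)$, and the coding theorem for semimeasures yields $\m(\langle a,c\rangle)\gem\m(\psi)\m(a)\psi_a(c)$, i.e. $\psi_a(c)/\m(\langle a,c\rangle)\lem 1/(\m(\psi)\m(a))$ pointwise in $c$. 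The second ingredient marginalizes the numerator: the map $a\mapsto\sum_c\m(\langle a,c\rangle\mid b^*)$ is lower semicomputable relative to $b^*$ and is a semimeasure in $a$ (the pairing is injective), so maximality gives $\sum_c\m(\langle a,c\rangle\mid b^*)\lem\m(a\mid b^*)$.

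Combining these, I would bound
$$\sum_c\frac{\m(\langle a,c\rangle\mid b^*)}{\m(\langle a,c\rangle)}\psi_a(c)\lem\frac{1}{\m(\psi)\m(a)}\sum_c\m(\langle a,c\rangle\mid b^*)\lem\frac{\m(a\mid b^*)}{\m(\psi)\m(a)}\eqm\frac{2^{\i(a:b)}}{\m(\psi)},$$
where the last step reuses the opening identity with $u=a$. The step I expect to be most delicate is the construction of $\nu$ and the accounting of its domination constant: one must verify that $\psi_a$ is enumerable from below \emph{uniformly} in $a$ (so that $\nu$ is genuinely lower semicomputable), and that the cost of describing $\nu$ is bounded by that of $\psi$, so that the domination constant is captured by $\m(\psi)$ rather than by some larger complexity. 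One must also keep the conditioning on the shortest program $b^*$ (rather than on $b$) consistent across both uses of symmetry of information, since the $\lem$ and $\eqm$ relations are only additive-constant tight in that form.
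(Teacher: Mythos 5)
Your proof is correct, and its skeleton matches the paper's: establish the pointwise bound $\psi_a(c)/\m(\langle a,c\rangle)\lem 1/(\m(\psi)\m(a))$ via the universality of $\m$, then marginalize the numerator over $c$. The differences lie in which standard facts you invoke. The paper stays entirely in the joint form, writing $2^{\i(u:b)}\eqm\m(u,b)/(\m(u)\m(b))$ (which needs only the coding theorem), derives the pointwise bound from conditional universality $\m(c|a)\gem\m(\psi)\psi_a(c)$ combined with the easy chain-rule inequality $\m(c|a)\m(a)\lem\m(a,c)$, and marginalizes via $\sum_c\m(a,b,c)\lem\m(a,b)$. You instead route through the conditional form $\m(u|b^*)/\m(u)$, which additionally requires the Levin--G\'acs symmetry-of-information theorem $\K(u,b)\eqa\K(b)+\K(u|b^*)$, and you obtain the pointwise bound by explicitly constructing the joint semimeasure $\nu(\langle a,c\rangle)=\m(a)\psi_a(c)$; your marginalization $\sum_c\m(\langle a,c\rangle|b^*)\lem\m(a|b^*)$ is the relativized counterpart of the paper's. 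The two routes are equivalent up to multiplicative constants: the paper's is marginally more elementary in its prerequisites, while yours has the virtue of making explicit the uniformity issues (lower semicomputability of $\nu$ uniformly from a program for $\psi$, so the domination constant really is $\m(\psi)$, and uniformity of the universality constant in the condition $b^*$) that the paper's one-line appeals leave implicit.
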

\begin{proof}
	This requires a slight modification of the proof of Proposition 2 in \cite{Levin84}, by requiring $\psi$ to have $a$ as auxilliary information. For completeness, we reproduce the proof. We need to show $\m(a,b)/(\m(a)\m(b))\gem \sum_c (\m(a,b,c)/(\m(b)\m(a,c)))\m(\psi)\psi_a(c)$, or $\sum_c (\m(a,b,c)/\m(a,c))\m(c|a)\lem \m(a,b)/\m(a)$, since $\m(c|a)\gem\m(\psi)\psi_a(c)$. Rewrite it $\sum_c\m(c|a)\m(a,b,c)/\m(a,c)\lem \m(a,b)/\m(a)$ or $\sum_c\m(c|a)\m(a)\m(a,b,c)/\m(a,c)\lem \m(a,b)$. The latter is obvious since $\m(c|a)\m(a)\lem \m(a,c)$ and $\sum_c\m(a,b,c)\lem \m(a,b)$.
\end{proof}\newpage
\begin{prp}
\label{prp:discomp}
	For enumerable semi-measures $p$, $q$, $\ii(p:q)\lea\i(\langle p\rangle:\langle  q \rangle)$.
\end{prp}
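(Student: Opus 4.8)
The plan is to collapse the double sum defining $\ii(p:q)=\log\sum_{x,y}2^{\i(x:y)}p(x)q(y)$ down to $\i(\langle p\rangle:\langle q\rangle)$ by applying the randomized conservation bound of Lemma~\ref{lem:consran} twice: once to sum out $x$ against $p$, and once to sum out $y$ against $q$. The two preliminary moves are reshaping the exponent $\i(x:y)$ into the form $\i(\langle a,c\rangle:b)$ that Lemma~\ref{lem:consran} expects, and arranging the auxiliary semi-measure so that no description-complexity overhead is incurred.

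For the first sum I would fix $y$ and note that extracting $x$ from the pair $\langle\langle p\rangle,x\rangle$ is a single fixed partial recursive map, so Lemma~\ref{lmm:consfunc} yields $\i(x:y)\lea\i(\langle\langle p\rangle,x\rangle:y)$ and hence $\sum_{x,y}2^{\i(x:y)}p(x)q(y)\lem\sum_y q(y)\sum_x 2^{\i(\langle\langle p\rangle,x\rangle:y)}p(x)$. I then apply Lemma~\ref{lem:consran} with $a=\langle p\rangle$, $b=y$, $c=x$, and $\psi$ the fixed relativized enumerator that reads $\langle p\rangle$ off $a$ and enumerates $p$, so that $\psi_{\langle p\rangle}=p$; this gives $\sum_x 2^{\i(\langle\langle p\rangle,x\rangle:y)}p(x)\lem 2^{\i(\langle p\rangle:y)}$. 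Substituting, then using the $O(1)$-symmetry of $\i$ and a second use of Lemma~\ref{lmm:consfunc} (extracting $y$ from $\langle\langle q\rangle,y\rangle$, with the second argument $\langle p\rangle$ held fixed), I obtain $\sum_y q(y)2^{\i(\langle p\rangle:y)}\lem\sum_y q(y)2^{\i(\langle\langle q\rangle,y\rangle:\langle p\rangle)}$. A second application of Lemma~\ref{lem:consran}, now with $a=\langle q\rangle$, $b=\langle p\rangle$, and $\psi_{\langle q\rangle}=q$, bounds this by $2^{\i(\langle q\rangle:\langle p\rangle)}\eqm 2^{\i(\langle p\rangle:\langle q\rangle)}$. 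Taking logarithms gives $\ii(p:q)\lea\i(\langle p\rangle:\langle q\rangle)$.

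The step I expect to be the main obstacle is controlling the factor $1/\m(\psi)$ in Lemma~\ref{lem:consran}: the naive choice of letting $\psi$ be a bare enumerator for $p$ would charge $-\log\m(p)\eqa\K(p)$, and likewise $\K(q)$ from the second pass, leaving the spurious estimate $\ii(p:q)\lea\i(\langle p\rangle:\langle q\rangle)+\K(p)+\K(q)$. The resolution is to pick $\psi$ as a single fixed decode-and-enumerate algorithm whose index does not depend on $p$; since $a=\langle p\rangle$ already carries $p$, this $\psi$ has $\m(\psi)\eqm 1$, so the overhead collapses to a universal constant. The remaining care is purely bookkeeping: checking that $\psi_a$ is genuinely enumerable relative to $a$ and is a semi-measure, and that the symmetry and data-processing estimates cost only additive constants.
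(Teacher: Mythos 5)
Your proposal is correct and follows essentially the same route as the paper: reshape the exponent via Lemma~\ref{lmm:consfunc}, then apply Lemma~\ref{lem:consran} twice with $\psi$ taken to be a single fixed machine $T$ that reads $\langle p\rangle$ (resp.\ $\langle q\rangle$) from its auxiliary input and enumerates the measure, so that the $1/\m(\psi)$ overhead is $O(1)$. The subtlety you flag about avoiding the $\K(p)+\K(q)$ charge is precisely the point of the paper's choice of $T$, and your use of the symmetry of $\i$ before the second application is only a cosmetic difference from the paper's presentation.
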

\begin{proof}
	Let $T$ be a Turing machine, that when given an encoding of a lower semi-computable probabiliy $p$ and an input $x$, lower enumerates $p(x)$. $\ii(p:q)=\log\sum_{x,y}2^{\i(x:y)}T_p(x)T_q(y)$. Using Lemmas \ref{lem:consran} and \ref{lmm:consfunc},	
\begin{align*}	
	&\ii(p:q)\\
	\lea&\log\sum_{x,y}2^{\i(\langle x,p\rangle:y)}T_p(x)T_q(y)\\
	\lea &\log \sum_y2^{\i(\langle p\rangle:y)}q(y)/\m(T)\\
	\lea &\log \sum_y2^{\i(\langle p\rangle:\langle y,q\rangle)}q(y)/\m(T)\\
	\lea& \log 2^{\i(\langle p\rangle:\langle q\rangle)}/\m(T)^2\\
	\lea& \i(\langle p\rangle:\langle q\rangle).
	\end{align*}
\end{proof}

\begin{exm}$ $\\
	\vspace*{-0.5cm}
	\begin{itemize}
	\item  In general, a probability $p$, will have low $\ii(p:p)$ if it has large measure on simple strings, or low measure on a large number of complex strings, or some combination of the two.
	\item If probability $p$ is concentrated on a single string $x$, then $\ii(p:p)=\K(x)$. 
	\item The uniform distribution $U_n$ over strings of length $n$ has self information equal to (up to an additive constant) $\K(n)$. This is because due to Proposition \ref{prp:discomp}, $\ii(U_n:U_n)\lea \K(n)$ and using Lemma \ref{lmm:consfunc}, $\ii(U_n:U_n)=\log\sum_{x,y\in\BT^n}2^{\i(x:y)}2^{-2n}\gea \log \sum_{x,y\in\BT^n}2^{\i(n:n)}2^{-2n}\gea \K(n)$.
\item There are semi-measures that have infinite self information. Let $\alpha_n$ be the $n$ bit prefix of a Martin L\"of random sequence $\alpha$ and $n\in [2,\infty)$.  Semi-measure $p(x)=[x=\alpha_n]n^{-2}$ has 
$\ii(p:p)=\infty$. 
\item The universal semi-measure $\m$ has no self information. 
\item Another example is a probability $p$ where for some $x\in\BT^n$, $p(xy)=2^{-n}$ if $\|y\|=n$, and 0 otherwise. Using Proposition \ref{prp:discomp}, $\ii(p:p)\lea \K(\langle p\rangle)\lea \K(x)$. In addition, using Lemma \ref{lmm:consfunc}, $\ii(p:p)=\log\sum_{xy,xz}2^{\i(xy:xz)}2^{-2n}\gea \log\sum_{xy,xz}2^{\i(x:x)}2^{-2n}\eqa \K(x)$. So the self information of $p$ is equal to $\K(x)$.
\item In general the information between probabilities can be arbitrarily smaller than the information between their encodings. For example, take an arbitrarily large random string $x$, and the probability $p(0)=0.x$, and $p(1)=1-p(0)$. Thus $\ii(p:p)\ll \K(p)$.
\item There exists probabilities $p$ and $q$ such that $\ii(p:p)\ll \ii(p:q)$. Take a large random string $y$ and let $p(0)=0.5$ and $p(y)=0.5$ and $q(y)=1$.
\end{itemize}
\end{exm}\newpage

\begin{dff}[Channel]
	A channel $f:\FS\times\FS\rightarrow\R_{\geq 0}$ has $f(\cdot|x)$ being a probability measure over $\FS$ for each $x\in\FS$. For probability $p$, channel $f$, $fp(x)=\sum_z f(x|z)p(z)$.
\end{dff}
\begin{exm}[Uniform Spread]
An example channel $f$ has $f(\cdot|x)$ be the uniform distribution over strings of length $\|x\|$. This is a cannonical spread function. Thus if $p$ is a probability measure concentrated on a single string, then $\ii(p:p)=\K(x)$, and $\ii(fp:fp)\eqa \K(\|x\|)$. Thus $f$ results in a decrease of self-information of $p$. This decrease of information occurs over all probabilities and computable channels.
\end{exm}

\begin{thr}
	For probabilities $p$ and $q$ over $\FS$, computable channel $f$, $\ii(fp:q)\lea\ii(p:q)$.
\end{thr}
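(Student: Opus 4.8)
The plan is to unfold both the definition of $\ii(fp:q)$ and the action of the channel, reducing everything to the conservation lemmas already established. Starting from $\ii(fp:q)=\log\sum_{x,y}2^{\i(x:y)}(fp)(x)q(y)$ and substituting $(fp)(x)=\sum_z f(x|z)p(z)$, I would swap the order of summation to write the argument of the logarithm as $\sum_{z,y}p(z)q(y)\left(\sum_x 2^{\i(x:y)}f(x|z)\right)$. The goal then becomes to bound the inner sum $\sum_x 2^{\i(x:y)}f(x|z)$ by $2^{\i(z:y)}$ up to a multiplicative constant, after which the whole expression collapses to $\sum_{z,y}2^{\i(z:y)}p(z)q(y)$, whose logarithm is exactly $\ii(p:q)$.

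To bound the inner sum I would invoke Lemma \ref{lem:consran} with the channel input $z$ playing the role of the auxiliary string $a$, the output $x$ playing the role of $c$, and $\psi_z(\cdot)=f(\cdot|z)$. This is legitimate because, for each fixed $z$, the slice $f(\cdot|z)$ is a probability measure (hence an enumerable semi-measure) that is semi-computable relative to $z$, precisely the hypothesis of the lemma. Since $f$ is a fixed computable channel, $\m(f)$ is a positive constant and can be absorbed into the multiplicative constant, so the lemma gives $\sum_x 2^{\i(\langle z,x\rangle:y)}f(x|z)\lem 2^{\i(z:y)}/\m(f)\lem 2^{\i(z:y)}$.

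The one bridging step is that Lemma \ref{lem:consran} produces $\i(\langle z,x\rangle:y)$ in the exponent, whereas the inner sum carries $\i(x:y)$. I would close this gap with Lemma \ref{lmm:consfunc} applied to the partial recursive projection $\langle z,x\rangle\mapsto x$, which yields $\i(x:y)\lea\i(\langle z,x\rangle:y)$ and hence $2^{\i(x:y)}\lem 2^{\i(\langle z,x\rangle:y)}$ term by term; summing against the nonnegative weights $f(x|z)$ preserves this inequality. Chaining the two lemmas bounds the inner sum by $2^{\i(z:y)}$ up to a constant, and taking logarithms converts the multiplicative constant into the additive $\lea$, giving $\ii(fp:q)\lea\ii(p:q)$.

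I expect the main obstacle to be the careful bookkeeping of the auxiliary information in Lemma \ref{lem:consran}: one must ensure the channel input $z$ is supplied as the conditioning string $a$ so that the randomized processing $x\sim f(\cdot|z)$ genuinely falls under the scope of that lemma, and then correctly match $\i(x:y)$ to $\i(\langle z,x\rangle:y)$ via the projection. The remaining manipulations — swapping sums, absorbing $\m(f)$, and passing to logarithms — are routine, mirroring the structure of the proof of Proposition \ref{prp:discomp}.
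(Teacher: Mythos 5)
Your proposal is correct and follows essentially the same route as the paper's own proof: unfold $(fp)(x)=\sum_z f(x|z)p(z)$, swap the sums, pass from $\i(x:y)$ to $\i(\langle z,x\rangle:y)$ via Lemma \ref{lmm:consfunc}, and then apply Lemma \ref{lem:consran} with $\psi_z=f(\cdot|z)$ to collapse the inner sum to $2^{\i(z:y)}$ up to the constant $1/\m(f)$. The bookkeeping you flag as the main obstacle is handled exactly as you describe in the paper's argument.
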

\begin{proof} Using Lemma \ref{lmm:consfunc},
 $\ii(fp:q)$ $=\log\sum_{x,y}2^{\i(x:y)}\sum_zf(x|z)p(z)q(y)$\\ $\lea\log\sum_{y,z
 }q(y)p(z)\sum_x2^{\i((x,z):y)}f(x|z)$. Using Lemma \ref{lem:consran}, $\ii(fp:q)\lea\log\sum_{z,y}q(y)p(z)2^{\i(z:y)}$ $\eqa \ii(p:q)$.	
\end{proof}
\begin{dff}[Information, Infinite Sequences, \cite{Levin74}] $ $\\
For $\alpha,\beta\in\IS$,
	$\I(\alpha:\beta)=\log\sum_{x,y\in\FS}\m(x|\alpha)\m(y|\beta)2^{\i(x:y)}$.
	\end{dff}
\begin{prp}[Folklore]
	\label{prp:folk}
	For $x,y\in\FS$, $\I(\langle x\rangle0^\infty:\langle y\rangle0^\infty)\eqa\i( x: y)$.
\end{prp}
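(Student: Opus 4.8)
The plan is to prove the two bounds $\I(\langle x\rangle0^\infty:\langle y\rangle0^\infty)\lea\i(x:y)$ and $\gea\i(x:y)$ separately, after first reducing the oracle conditioning to conditioning on the finite strings. The reduction rests on the observation that the infinite sequence $\alpha=\langle x\rangle0^\infty$ and the finite string $x$ are computationally interchangeable: a machine with oracle $\alpha$ recovers $x$ by reading the self-delimiting prefix $1^{\|x\|}0x$, and conversely, given $x$ as input, a machine can answer any query to $\alpha$, since every bit of $\langle x\rangle0^\infty$ is computable from $x$. Consequently $\m(u|\alpha)\eqm\m(u|x)$ for all $u$, and likewise $\m(v|\langle y\rangle0^\infty)\eqm\m(v|y)$. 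Substituting these into the definition of $\I$ and pulling the multiplicative constants out of the sum, I would obtain $\I(\langle x\rangle0^\infty:\langle y\rangle0^\infty)\eqa\log\sum_{u,v\in\FS}\m(u|x)\m(v|y)2^{\i(u:v)}$, so it suffices to show this log-sum is $\eqa\i(x:y)$.

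For the lower bound I would simply keep the single diagonal term $u=x$, $v=y$. Since $\m(x|x)\eqm\m(y|y)\eqm1$, this term is $\gem 2^{\i(x:y)}$, and as the full sum dominates it, the log is $\gea\i(x:y)$.

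The upper bound is where the conservation inequalities enter. First, because $u$ is a partial recursive function of $\langle x,u\rangle$, Lemma \ref{lmm:consfunc} gives $\i(u:v)\lea\i(\langle x,u\rangle:v)$, hence $2^{\i(u:v)}\lem2^{\i(\langle x,u\rangle:v)}$. Treating $\psi_x(u)=\m(u|x)$ as an enumerable semi-measure semi-computable relative to $x$ with $\m(\psi)\eqm1$, Lemma \ref{lem:consran} yields $\sum_u 2^{\i(\langle x,u\rangle:v)}\m(u|x)\lem 2^{\i(x:v)}$. This collapses the double sum to $\sum_v\m(v|y)2^{\i(x:v)}$. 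Using the symmetry $\i(x:v)=\i(v:x)$ and applying the same two steps with the roles of the coordinates exchanged, namely Lemma \ref{lmm:consfunc} to get $\i(v:x)\lea\i(\langle y,v\rangle:x)$ and Lemma \ref{lem:consran} with $\psi_y(v)=\m(v|y)$, bounds this by $2^{\i(y:x)}=2^{\i(x:y)}$. Taking logs gives $\lea\i(x:y)$, completing the argument.

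The only genuinely delicate point is the first reduction: one must verify that oracle access to the infinite string $\langle x\rangle0^\infty$ and finite-input access to $x$ really do furnish the same algorithmic probabilities up to $\eqm$, including the (here trivial) fact that the tail $0^\infty$ carries no additional information. Everything after that is a routine double application of the two conservation lemmas, so I expect the interchangeability step to be the crux, consistent with the result being labelled folklore.
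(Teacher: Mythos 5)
Your proposal is correct and follows essentially the same route as the paper's proof: reduce oracle access to $\langle x\rangle0^\infty$ to conditioning on $x$, take the diagonal term for the lower bound, and apply Lemma \ref{lmm:consfunc} followed by Lemma \ref{lem:consran} once in each coordinate for the upper bound. The only cosmetic difference is that the paper keeps $\m(x|\langle x\rangle0^\infty)$ directly in the lower bound rather than passing through the reduction first.
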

\begin{proof}
	$\I(\langle x\rangle 0^\infty:\langle y\rangle 0^\infty)> \log \m( x|\langle x\rangle 0^\infty)\m( y|\langle y\rangle 0^\infty)2^{\i( x: y)}\eqa \i( x: y)$. For the other direction, using Lemmas \ref{lmm:consfunc} and \ref{lem:consran},
	\begin{align*}
&\I(\langle x\rangle 0^\infty):\langle y\rangle 0^\infty))\\
=&\log\sum_{c,d}\m(c|\langle x\rangle 0^\infty)\m(d|\langle y\rangle 0^\infty)2^{\i(c:d)}\\
\eqa&\log\sum_{c,d}\m(c|x)\m(d|y)2^{\i(c:d)}\\
\lea& \log\sum_{c,d}\m(c| x)\m(d|y)2^{\i(\langle c,x\rangle:d)}\\
\lea &\log\sum_{d}\m(d|y)2^{\i(x:d)}\\
 \lea &\log\sum_{d}\m(d|y)2^{\i(x,\langle d,y\rangle)}\\
 \lea &\log2^{\i(x:y)}.
\end{align*}
\end{proof}
\begin{dff}[Information, Probabilities over Infinite Sequences]
For probabilities $P$, $Q$ over infinite sequences. $\I(P:Q)=\log\int2^{\I(\alpha:\beta)}dP(\alpha)dQ(\beta)$.
\end{dff}
By Carath\'{e}odory's theorem, a measure over $\IS$ can be associated with a function $F:\FS\rightarrow\R_{\geq 0}$, where $F(\emptyset)=1$ and $F(x)=F(x0)+F(x1)$. A probability $P$ is computable if its corresponding function $F$ is computable. The encoding of a computable probability $P$, is equal to $\langle P\rangle=\langle F\rangle$. This term means every possible encoding of $\langle F\rangle$, over all $F$ that computes $P$. Thus if we say $\i(\langle P\rangle:y)>a$, then this means all encoding of $P$ have at least $a$ mutual information with $y$. 
\begin{exm}[Information over Cylinders]
Let $P$ be the measure defined by $F(y)=[x\sqsubseteq y]2^{-\|y\|+\|x\|}$. Thus $P$ is the uniform measure over all sequences that start with $x$. Let $\mathcal{U}$ be the uniform measure over $\IS$. $\I(P:P)=\log \int 2^{\I(\alpha:\beta)}dP(\alpha)dP(\beta)=\int 2^{\I(x\alpha:x\beta)}d\mathcal{U}(\alpha)d\mathcal{U}(\beta)$. For all $x\in\FS$, $\alpha,\beta\in\IS$, $\I(x\alpha:x\beta)> \log \m(x|x\alpha)\m(x|x\beta)2^{\i(x:x)}\gea \K(x)-2\K(\|x\|)$. Thus $\I(P:P)\gea \K(x)-2\K(\|x\|)$. This inequality holds for any probability $P$ whose support is restriced to the cylinder set $x\IS$.
\end{exm}
\begin{thr}[\cite{Vereshchagin21,Levin74}]
	\label{thr:infsimp}
	$\I(A(\alpha):\beta)\lea \I(\alpha:\beta)$, where $A$ is an algorithm and $A(\alpha)$ produces an infinite sequence.
\end{thr}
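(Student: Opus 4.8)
The plan is to unfold the definition of $\I$ and reduce the whole statement to one monotonicity fact about the conditional algorithmic probability. Writing out the definition, $\I(A(\alpha):\beta)=\log\sum_{x,y\in\FS}\m(x|A(\alpha))\m(y|\beta)2^{\i(x:y)}$, and the same expression with $\alpha$ in place of $A(\alpha)$ computes $\I(\alpha:\beta)$. Since $A$ acts only on the first argument, the factors $\m(y|\beta)$ and $2^{\i(x:y)}$ are identical in both sums. Hence it suffices to dominate the remaining factor $\m(x|A(\alpha))$, uniformly in $x$, by $\m(x|\alpha)$ up to a multiplicative constant independent of $\alpha$, $\beta$, $x$, and $y$.

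The key lemma I would isolate is $\m(x|A(\alpha))\lem\m(x|\alpha)$. To see this, note that $\m(\cdot|A(\alpha))$ is lower semicomputable with oracle $A(\alpha)$; I claim it is in fact lower semicomputable with oracle $\alpha$. Indeed, since $A$ is a fixed algorithm that on input $\alpha$ outputs a total infinite sequence, every bit of $A(\alpha)$ is computable from $\alpha$, uniformly. Any stage of the enumeration of $\m(x|A(\alpha))$ queries only finitely many bits of its oracle, so each such query can be answered by computing the corresponding bit of $A(\alpha)$ from $\alpha$. This exhibits $\m(\cdot|A(\alpha))$ as a semimeasure lower semicomputable relative to $\alpha$, and by the maximality (universality) of $\m(\cdot|\alpha)$ among such semimeasures it is dominated by $\m(\cdot|\alpha)$, with a multiplicative constant that depends only on the fixed algorithm $A$ and the reference universal machine.

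With the lemma in hand the conclusion is immediate: multiplying $\m(x|A(\alpha))\lem\m(x|\alpha)$ termwise by the common nonnegative factors $\m(y|\beta)2^{\i(x:y)}$ and summing over all $x,y$ gives $\sum_{x,y}\m(x|A(\alpha))\m(y|\beta)2^{\i(x:y)}\lem\sum_{x,y}\m(x|\alpha)\m(y|\beta)2^{\i(x:y)}$. Taking logarithms turns the multiplicative constant into an additive one and yields $\I(A(\alpha):\beta)\lea\I(\alpha:\beta)$, as desired.

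The main obstacle, and the one step I would write out with care, is the key lemma: I must make the informal ``simulate the $A(\alpha)$-oracle using the $\alpha$-oracle'' argument fully rigorous. Two points need attention. First, the hypothesis that $A(\alpha)$ is a genuine infinite sequence is exactly what guarantees that the bit-querying subroutine always halts, so the reduction is total and produces a bona fide semimeasure lower semicomputable relative to $\alpha$ rather than a defective object. Second, I must verify that the domination constant is truly uniform in $\alpha$ and $\beta$; this holds because the program witnessing lower semicomputability relative to $\alpha$ is a single fixed program that hardwires $A$ and invokes the $\alpha$-oracle, so its description length, and hence the constant, does not depend on the particular $\alpha$ or $\beta$.
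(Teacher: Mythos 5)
Your argument is correct. Note that the paper does not prove this theorem at all---it imports it from the cited references---so there is no internal proof to compare against; your route, reducing everything to the single domination $\m(x|A(\alpha))\lem\m(x|\alpha)$ obtained by simulating the $A(\alpha)$-oracle from the $\alpha$-oracle and invoking universality of $\m(\cdot|\alpha)$, is exactly the standard proof of this conservation inequality (essentially the one in Vereshchagin's paper), and you correctly flag the two points that need care: totality of $A(\alpha)$ so the simulation is well defined, and uniformity of the constant in $\alpha$.
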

\begin{thr}[ \cite{Vereshchagin21,Levin74}]
	\label{thr:inf}
	$\int2^{\I\langle (\alpha,\gamma\rangle:\beta)}dP_\gamma(\alpha)< 2^{\I(
		\gamma:\beta)}+c_P$.
\end{thr}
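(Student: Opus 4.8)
The plan is to reduce the statement to a domination inequality for conditional algorithmic probability, mirroring the discrete bound of Lemma~\ref{lem:consran} but carried out measure-theoretically over $\IS$. First I would unfold the definition of $\I$ and push the integral inside the defining sum. Since every term is nonnegative, Tonelli's theorem gives
\begin{align*}
\int 2^{\I(\langle\alpha,\gamma\rangle:\beta)}dP_\gamma(\alpha)
&=\int\sum_{x,y\in\FS}\m(x|\langle\alpha,\gamma\rangle)\m(y|\beta)2^{\i(x:y)}dP_\gamma(\alpha)\\
&=\sum_{x,y\in\FS}g(x)\m(y|\beta)2^{\i(x:y)},
\end{align*}
where I set $g(x):=\int\m(x|\langle\alpha,\gamma\rangle)dP_\gamma(\alpha)$. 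The entire theorem then reduces to showing that $g$ is dominated by $\m(\cdot|\gamma)$.

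The second step is to verify that $g$ is a lower-semicomputable semimeasure in $x$ relative to $\gamma$. That it is a semimeasure is immediate: $\sum_x g(x)=\int\sum_x\m(x|\langle\alpha,\gamma\rangle)dP_\gamma(\alpha)\leq\int 1\,dP_\gamma(\alpha)=1$, using $\sum_x\m(x|w)\leq 1$ and that $P_\gamma$ is a probability measure. The delicate point is lower-semicomputability relative to $\gamma$: any finite-stage approximation $\m_t(x|\langle\alpha,\gamma\rangle)$ reads only a finite prefix of the oracle, so for fixed $t$ and $x$ it is a simple function, constant on finitely many cylinders of $\alpha$. Integrating such a step function against $P_\gamma$ yields a finite combination of cylinder measures, each computable relative to $\gamma$ because $P_\gamma$ is; letting $t\to\infty$ exhibits $g$ as an increasing, $\gamma$-effective limit. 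I expect this to be the main obstacle, as it is where oracle computation on infinite inputs must be reconciled with integration against a relatively computable measure.

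Finally, by maximality of the conditional universal semimeasure, any $\gamma$-lower-semicomputable semimeasure is dominated by $\m(\cdot|\gamma)$ up to a multiplicative constant fixed by the description of the generating process, here the algorithm computing $P$. Thus $g(x)\lem\m(x|\gamma)/\m(P)$, and substituting this back gives
\begin{align*}
\sum_{x,y\in\FS}g(x)\m(y|\beta)2^{\i(x:y)}
&\lem\frac{1}{\m(P)}\sum_{x,y\in\FS}\m(x|\gamma)\m(y|\beta)2^{\i(x:y)}\\
&=\frac{2^{\I(\gamma:\beta)}}{\m(P)},
\end{align*}
which is the claimed conservation bound: the factor $1/\m(P)$ is an additive constant $c_P$ in the information ($\log$) scale, depending only on $P$ and not on $\gamma$ or $\beta$. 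The only facts used beyond the definitions are Tonelli's theorem and the maximality of $\m(\cdot|\gamma)$, precisely the ingredients behind Lemma~\ref{lem:consran}.
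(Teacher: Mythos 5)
The paper states this theorem purely as a citation to Vereshchagin and Levin and gives no proof of its own, so the only internal point of comparison is Lemma~\ref{lem:consran}; your argument is precisely the standard proof behind the cited references and the exact measure-theoretic analogue of that lemma's proof (integrate out $\alpha$ to form the averaged semimeasure $g$, verify it is a semimeasure lower semicomputable relative to $\gamma$ with constant depending only on the algorithm for $P$, then invoke maximality of $\m(\cdot|\gamma)$), and it is correct. Note only that you derive the bound with a multiplicative constant $1/\m(P)$, i.e.\ $2^{\I(\gamma:\beta)+c_P}$, which is how the paper actually uses the theorem (via $\lem$ in the proof of Theorem~\ref{thr:genprobcomp}) rather than the additive ``$+\,c_P$'' as literally typeset in the statement.
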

\begin{prp}
	\label{prp:infcomp}
	For computable probabilities $P$ and $Q$ over $\IS$, $\I(P:Q)\lea \i(\langle P\rangle:\langle Q\rangle)$.
\end{prp}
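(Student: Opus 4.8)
The plan is to mirror the proof of Proposition \ref{prp:discomp}, replacing Lemmas \ref{lmm:consfunc} and \ref{lem:consran} by their infinite-sequence counterparts, Theorems \ref{thr:infsimp} and \ref{thr:inf}. First I would reduce the claim to a statement purely about infinite sequences. Writing $\pi=\langle P\rangle0^\infty$ and $\rho=\langle Q\rangle0^\infty$ for the infinite sequences that computably encode $P$ and $Q$, Proposition \ref{prp:folk} gives $\I(\pi:\rho)\eqa\i(\langle P\rangle:\langle Q\rangle)$ (the extra self-delimiting wrapper is immaterial, as the two encodings compute one another up to $O(1)$), so it suffices to prove $\I(P:Q)\lea\I(\pi:\rho)$. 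Everything below holds for an arbitrary choice of encoding, so the bound holds for every encoding of $P$ and $Q$.

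Next I would peel off the two integrations one at a time, exactly as the finite proof peels off the sums over $x$ and $y$. Starting from $2^{\I(P:Q)}=\int 2^{\I(\alpha:\beta)}\,dP(\alpha)\,dQ(\beta)$, fix $\beta$ and note that $\alpha$ is a computable projection of $\langle\alpha,\pi\rangle$; by Theorem \ref{thr:infsimp} this yields $\I(\alpha:\beta)\lea\I(\langle\alpha,\pi\rangle:\beta)$, hence $\int 2^{\I(\alpha:\beta)}\,dP(\alpha)\lem\int 2^{\I(\langle\alpha,\pi\rangle:\beta)}\,dP(\alpha)$. Applying Theorem \ref{thr:inf} with $\gamma=\pi$ and the channel that samples from $P$ collapses the right side to $\lem 2^{\I(\pi:\beta)}$. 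Integrating the resulting inequality against $dQ(\beta)$, using the symmetry $\I(\pi:\beta)\eqa\I(\beta:\pi)$ (which follows from symmetry of $\i$), and repeating the same two steps with the roles of $P$ and $Q$ exchanged gives $2^{\I(P:Q)}\lem 2^{\I(\pi:\rho)}$, which is the desired bound.

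I expect the main obstacle to be controlling the constant $c_P$ produced by Theorem \ref{thr:inf}, since an additive constant depending on $P$ would destroy the universal $\lea$ in the statement. The resolution is the same device that makes the universal enumerator $T$ harmless in Proposition \ref{prp:discomp}: rather than taking the constant channel $P_\gamma\equiv P$ (whose description cost is $\K(P)$), I would take $P_\gamma$ to be the single fixed algorithm that decodes the measure recorded in the prefix of $\gamma$ and then samples from it. For $\gamma=\pi$ this algorithm samples from $P$, and because the channel program no longer depends on $P$ its contribution to $c_P$ is $O(1)$. This is precisely where computability of $P$ and $Q$ enters: the decoder must reconstruct the measure exactly from its encoding. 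Finally, since $\I(\pi:\beta)\gea 0$ (take the empty string in the defining sum), we have $2^{\I(\pi:\beta)}\gem 1$, so the additive $c_P=O(1)$ is absorbed into a multiplicative constant, turning $2^{\I(\pi:\beta)}+c_P$ into $\lem 2^{\I(\pi:\beta)}$.
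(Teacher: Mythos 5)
Your proposal is correct and follows essentially the same route as the paper: both reduce to information between the encoding sequences via Proposition \ref{prp:folk}, peel off the two integrals one at a time using Theorems \ref{thr:infsimp} and \ref{thr:inf}, and neutralize the would-be dependence on $P$ by replacing the constant channel with a single universal decoder applied to the encoding (the paper's machine $T$). Your explicit handling of the additive constant $c_P$ and of the symmetry of $\I$ only spells out steps the paper leaves implicit.
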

\begin{proof}
	Let $T$ be a program that on input $\langle R\rangle$ for some computable probability $R$, and some string $x$, outputs $R(x)$ to arbitrary precision. Using Theorems \ref{thr:infsimp} and \ref{thr:inf},
	\begin{align*} 
	&2^{\I(P:Q)}\\
	=&\int2^{\I(\alpha:\beta)}dT_{\langle P\rangle}(\alpha)dT_{\langle Q\rangle}(\beta)\\
	\lem &\int2^{\I(\langle \alpha, P\rangle:\beta)}dT_{\langle P\rangle}(\alpha)dT_{\langle Q\rangle}(\beta)\\
	\lem& \int2^{\I(\langle  P\rangle:\beta)}dT_{\langle Q\rangle}(\beta) \\
	\lem& \int2^{\I(\langle  P\rangle:\langle \beta, Q\rangle)}dT_{\langle Q\rangle}(\beta)\\
	\lem& 2^{\I( \langle P\rangle:\langle Q\rangle)}. 
	\end{align*}
	The theoreom follows from Proposition \ref{prp:folk}.
\end{proof}

\begin{dff}
	A random transition is of the form $\Lambda:\IS\times\IS\rightarrow\R_{\geq 0}$ where each $\Lambda(\cdot|\alpha)$ is a semi-measure over $\IS$ for each $\alpha\in\IS$ and for each measurable set $B$ in the Borel algebra of $\IS$, $\Lambda(B|\cdot)$ is a measurable function. For random transition $\Lambda$, probability $P$, $\Lambda P(\alpha) = \int_{\IS} \Lambda(\alpha|\beta)dP(\beta)$. A random transition $\Lambda$ is computable if the semi-measure $\Lambda(\cdot|\alpha)$ is uniformly computable given oracle access to $\alpha$.
\end{dff}

\begin{thr}
	\label{thr:genprobcomp}
	For probabilities $P$, $Q$, computable random transition $\Lambda$, $\I(\Lambda P:Q)\lea \I(P:Q)$. 
\end{thr}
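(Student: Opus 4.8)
The plan is to follow the finite-sequence argument establishing $\ii(fp:q)\lea\ii(p:q)$ almost verbatim, with sums replaced by integrals and the two conservation lemmas upgraded to their infinite-sequence counterparts: Lemma \ref{lmm:consfunc} becomes Theorem \ref{thr:infsimp}, and Lemma \ref{lem:consran} becomes Theorem \ref{thr:inf}. First I would unfold the definitions. Writing $\gamma$ for the source variable, so that $\Lambda P(\alpha)=\int_\IS\Lambda(\alpha|\gamma)\,dP(\gamma)$, the definition of information between probabilities over infinite sequences gives
\begin{align*}
2^{\I(\Lambda P:Q)}&=\int 2^{\I(\alpha:\beta)}\,d(\Lambda P)(\alpha)\,dQ(\beta)\\
&=\int\!\!\int\!\!\int 2^{\I(\alpha:\beta)}\,d\Lambda(\alpha|\gamma)\,dP(\gamma)\,dQ(\beta),
\end{align*}
where the rearrangement is justified by Tonelli's theorem, using that $2^{\I(\alpha:\beta)}$ is a nonnegative Borel function and that $\Lambda(B|\cdot)$ is measurable for every Borel set $B$.

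Next I would push the exponent upward in two steps, holding $\gamma$ and $\beta$ fixed and working on the innermost integral against the semi-measure $\Lambda(\cdot|\gamma)$. Since $\alpha$ is computable from $\langle\alpha,\gamma\rangle$ by a projection algorithm, Theorem \ref{thr:infsimp} yields $\I(\alpha:\beta)\lea\I(\langle\alpha,\gamma\rangle:\beta)$, hence $2^{\I(\alpha:\beta)}\lem 2^{\I(\langle\alpha,\gamma\rangle:\beta)}$. Because $\Lambda$ is computable, the family $\Lambda(\cdot|\gamma)$ is a semi-measure uniformly computable relative to $\gamma$, so Theorem \ref{thr:inf} applies with $P_\gamma=\Lambda(\cdot|\gamma)$ to give
\[
\int 2^{\I(\langle\alpha,\gamma\rangle:\beta)}\,d\Lambda(\alpha|\gamma)\;\lem\; 2^{\I(\gamma:\beta)}.
\]
Here the additive constant $c_\Lambda$ supplied by Theorem \ref{thr:inf} is absorbed into a multiplicative one exactly as in the proof of Proposition \ref{prp:infcomp}, since $2^{\I(\gamma:\beta)}$ is bounded below by a positive constant (take the $\emptyset,\emptyset$ term in its defining sum). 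Substituting this bound back and integrating out $\gamma$ and $\beta$ gives $2^{\I(\Lambda P:Q)}\lem\int 2^{\I(\gamma:\beta)}\,dP(\gamma)\,dQ(\beta)=2^{\I(P:Q)}$, and taking logarithms delivers $\I(\Lambda P:Q)\lea\I(P:Q)$.

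The main obstacle I anticipate is not a single sharp inequality but the measure-theoretic bookkeeping needed to keep the constants uniform. In particular I must ensure that the application of Theorem \ref{thr:inf} is uniform in the parameters $\gamma$ and $\beta$, so that one fixed constant survives after integrating both of them out; this is precisely where computability of $\Lambda$ (rather than mere measurability of $\Lambda(B|\cdot)$) is used, since it guarantees that all the conditional semi-measures $\Lambda(\cdot|\gamma)$ issue from one machine and contribute only an $O(1)$ loss independent of $\gamma$. A secondary point to verify is that $\Lambda(\cdot|\gamma)$ being a semi-measure rather than a probability measure does not disturb the hypotheses of Theorem \ref{thr:inf}; this causes no difficulty because that theorem bounds the conservation integral $\int 2^{\I(\cdot:\beta)}$ from above, and passing from a measure to a semi-measure only decreases total mass. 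With these checks in place the remainder is a routine transcription of the finite-sequence proof.
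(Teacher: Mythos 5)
Your proposal is correct and follows essentially the same route as the paper: unfold $\Lambda P$ into a triple integral, upgrade $\I(\alpha:\beta)$ to $\I(\langle\alpha,\gamma\rangle:\beta)$ via Theorem \ref{thr:infsimp}, integrate out $\alpha$ against $\Lambda(\cdot|\gamma)$ via Theorem \ref{thr:inf}, and then integrate out $\gamma$ and $\beta$. Your additional remarks on uniformity of the constant in $\gamma$ and on the additive-to-multiplicative constant conversion are sound bookkeeping that the paper leaves implicit.
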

\begin{proof}
	\begin{align*}
	&2^{\I(\Lambda P:Q)}\\
	=&\int_\beta \int_\alpha 2^{\I(\alpha:\beta)} d\Lambda P(\alpha)dQ(\beta)\\
		=&\int_\beta\left(\int_\alpha\left(\int_\gamma  2^{\I(\alpha:\beta)} \Lambda(\alpha|\gamma)dP(\gamma)\right)d\alpha\right)dQ(\beta).
	\end{align*} 
	Using Theorems \ref{thr:infsimp} and \ref{thr:inf},
	\begin{align*}
	 &2^{\I(\Lambda P:Q)}\\\
	 \lem&\int_\beta\left(\int_\alpha\left(\int_\gamma 2^{\I(\langle \alpha,\gamma\rangle:\beta)} \Lambda(\alpha|\gamma)dP(\gamma)\right)d\alpha\right)dQ(\beta)\\
	 \eqm& \int_\beta\left(\int_\gamma\left(\int_\alpha 2^{\I(\langle \alpha,\gamma\rangle:\beta)} d\Lambda(\alpha|\gamma)\right) dP(\gamma)\right)dQ(\beta)\\ \lem&\int_\beta\int_\gamma 2^{\I(\gamma:\beta)} dP(\gamma)dQ(\beta)\\
	 \lem &2^{\I(P:Q)}.
	 \end{align*}
\end{proof}

\section{Probabilities Over General Spaces}
We extend conservation to Borel measures over $T_0$, second countable topologies. We restrict our attention to such topologies which can be represented by a tuple $(X,\mathcal{B},\nu)$ where $X$ is a set, $\nu$ is a countable basis for $X$ where $\nu=(\nu(1),\nu(2),\dots)$, and $\mathcal{B}$ is the Borel algebra formed from the topology. Because of the $T_0$ property, each point $x\in X$ is uniquely defined by the basis sets which contain it. 

\begin{dff}
	\label{dff:infoiso}
We define the following measurable injection $\pi$ from $X$ to $\IS$.  For $\alpha\in X$, let $\pi(\alpha)_i = [\alpha\in \nu(i)]$. For $x\in\FS$, let $\sigma(x)=\pi^{-1}(x\IS)$ be the corresponding measurable set in $X$ associated with $x\in\FS$. 
\end{dff}

Let $\mathcal{R}$ be the smallest ring formed from $\sigma$. By Carath\'{e}odory's theorem, we can associate each pre-measure $\mu$ over $\mathcal{R}$ with a unique Borel measure $\mathcal{P}$ over $\mathcal{B}$, such that its restriction to $\mathcal{R}$ is equal to $\mu$. Thus for each measure $\mathcal{P}$, we can associate a function $F:\FS\rightarrow\R_{\geq 0}$, such that $F(x) =\mathcal{P}\left(\sigma(x)\right)$. The probability measure $\overline{\mathcal{P}}$ over $\IS$ associated with $F$, is called the dual of $\mathcal{P}$. The probability measure $\mathcal{P}$ is computable if $\overline{\mathcal{P}}$ is computable. If $\mathcal{P}$ is computable, then $\langle \mathcal{P}\rangle=\langle \overline{\mathcal{P}}\rangle$.
\begin{clm}
	\label{clm}
For lower semi-continuous $f:\IS\rightarrow \R_{\geq 0}\cup\infty$, probability measure $\mathcal{P}$, \\ $\int_X f(\pi(\alpha))d\mathcal{P}(\alpha)=\int_{\IS}f(\alpha)d\overline{\mathcal{P}}(\alpha)$.
\end{clm}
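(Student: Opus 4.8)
The plan is to recognize that the dual measure $\overline{\mathcal{P}}$ is precisely the pushforward (image) measure $\pi_*\mathcal{P}$ on $\IS$, after which the statement reduces to the standard change-of-variables formula for image measures. First I would check that $\pi$ is Borel measurable. By Definition \ref{dff:infoiso}, membership of $\pi(\alpha)$ in a cylinder $x\IS$ with $x=x_1\cdots x_n$ is equivalent to $[\alpha\in\nu(i)]=x_i$ for every $i\le n$, so that $\sigma(x)=\pi^{-1}(x\IS)=\bigcap_{i\le n}\nu(i)^{x_i}$ is a finite Boolean combination of basis sets (writing $\nu(i)^1=\nu(i)$ and $\nu(i)^0=X\setminus\nu(i)$), hence an element of $\mathcal{B}$. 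Since the cylinders $x\IS$ generate the Borel $\sigma$-algebra of $\IS$, this establishes measurability of $\pi$.

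Next I would identify the two measures on cylinders. The decomposition $\sigma(x)=\sigma(x0)\sqcup\sigma(x1)$ holds because the $(\|x\|+1)$-st coordinate of $\pi(\alpha)$ is always $0$ or $1$; together with finite additivity of $\mathcal{P}$ this makes $F(x)=\mathcal{P}(\sigma(x))$ a consistent pre-measure, so $\overline{\mathcal{P}}$ is well defined, and by construction $\overline{\mathcal{P}}(x\IS)=F(x)=\mathcal{P}(\pi^{-1}(x\IS))=(\pi_*\mathcal{P})(x\IS)$. Thus $\overline{\mathcal{P}}$ and $\pi_*\mathcal{P}$ agree on every cylinder. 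Since the cylinders form a generating $\pi$-system and both are probability measures, the uniqueness clause of Carath\'eodory's extension theorem (equivalently Dynkin's $\pi$-$\lambda$ theorem) forces $\overline{\mathcal{P}}=\pi_*\mathcal{P}$ on all of $\mathcal{B}$.

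Finally I would invoke change of variables. A lower semi-continuous $f:\IS\to\R_{\geq 0}\cup\infty$ is Borel measurable, since $\{f>t\}$ is open and hence a countable union of cylinders, so $\{f\circ\pi>t\}=\pi^{-1}(\{f>t\})$ is a countable union of sets $\sigma(x)$ and lies in $\mathcal{B}$. The image-measure formula $\int_{\IS}f\,d(\pi_*\mathcal{P})=\int_X f\circ\pi\,d\mathcal{P}$, proved first for indicators of cylinders (where it is exactly the identity of the previous step), then extended to simple functions by linearity and to arbitrary nonnegative measurable $f$ by monotone convergence, combined with $\overline{\mathcal{P}}=\pi_*\mathcal{P}$, yields the claim. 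The bulk of the work sits in the middle step, namely justifying $\overline{\mathcal{P}}=\pi_*\mathcal{P}$ rigorously by verifying that $F$ is a genuine pre-measure on the ring generated by $\sigma$ and applying uniqueness of the extension; once both measures are known to coincide, the integral identity is the routine pushforward computation.
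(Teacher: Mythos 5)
Your proof is correct, but it is organized differently from the paper's. The paper's proof is a direct one-line computation: it uses lower semi-continuity to define $g(x)=\min_{x\sqsubset\alpha}f(\alpha)$, so that the step functions $\sum_{x\in\BT^n}g(x)[x\sqsubset\cdot]$ increase to $f$, writes $\int_{\IS}f\,d\overline{\mathcal{P}}$ as $\lim_n\sum_{x\in\BT^n}g(x)\overline{\mathcal{P}}(x)$ ``by the definition of integration,'' and swaps $\overline{\mathcal{P}}(x)$ for $\mathcal{P}(\sigma(x))$ term by term. You instead isolate the underlying measure-theoretic fact --- that $\overline{\mathcal{P}}$ is the pushforward $\pi_*\mathcal{P}$, established on cylinders and extended by uniqueness --- and then invoke the general change-of-variables formula. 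The two arguments share the same engine (agreement of the two measures on cylinder sets), but your packaging buys more: it makes explicit the measurability of $\pi$ and the pre-measure consistency $F(x)=F(x0)+F(x1)$, which the paper takes for granted, and it yields the identity for every nonnegative Borel $f$, with lower semi-continuity needed only as one convenient way to see measurability. The paper's version is shorter because lower semi-continuity hands it the monotone cylinder approximation directly, so no appeal to a uniqueness-of-extension argument is required; the price is that the interchange of limit and integral is left implicit (it is monotone convergence, valid precisely because the $g$-sums increase to $f$). Either route is acceptable; yours is the more robust one.
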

\begin{proof}
	Let $g:\FS\rightarrow\R_{\geq 0}$, such that $g(x)=\min_{x\sqsubset \alpha}f(\alpha)$ and $f(\alpha)=\sup_{x\sqsubset\alpha}g(x)$. By the definition of integration
	\begin{align*} &\int_{\IS}f(\alpha)d\overline{\mathcal{P}}(\alpha)
	=\lim_{n\rightarrow\infty}\sum_{x\in \BT^n}g(x)\overline{\mathcal{P}}(x)
	= \lim_{n\rightarrow \infty}\sum_{x\in\BT^n} g(x)\mathcal{P}(\sigma(x))
	=\int_X f(\pi(\alpha))d\mathcal{P}(\alpha).
	\end{align*}
\end{proof}
\begin{dff}[Information of Probabilities, General Topology]
	Given two measures $\mathcal{P}$ and $\mathcal{Q}$ over topology $(X,\mathcal{B},\nu)$, their mutual information is $\mathcal{I}(\mathcal{P}:\mathcal{Q})=\log\int 2^{\I(\pi(\alpha):\pi(\beta))}d\mathcal{P}(\alpha)
	d\mathcal{Q}(\beta)$.
\end{dff}\newpage

\begin{prp}
	\label{prp:compgen}
	If $\mathcal{P}$ and $\mathcal{Q}$ are computable, then $\mathcal{I}(\mathcal{P}:\mathcal{Q})\lea \i(\langle\mathcal{P}\rangle:\langle \mathcal{Q}\rangle)$.
\end{prp}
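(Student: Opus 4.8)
The plan is to reduce everything to the already-established infinite-sequence case via the duality between $\mathcal{P}$ and its dual $\overline{\mathcal{P}}$, using Claim \ref{clm} as the bridge. The central observation is that $2^{\I(\alpha:\beta)}$, viewed as a function of the pair $(\alpha,\beta)\in\IS\times\IS$, is lower semi-continuous: by definition $2^{\I(\alpha:\beta)}=\sum_{x,y}\m(x|\alpha)\m(y|\beta)2^{\i(x:y)}$, and each $\m(x|\alpha)$ is lower semi-computable in $\alpha$, hence lower semi-continuous, so this countable nonnegative sum of products is lower semi-continuous. This is precisely the hypothesis needed to invoke Claim \ref{clm}.

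First I would rewrite the defining integral of $\mathcal{I}(\mathcal{P}:\mathcal{Q})$. Applying Claim \ref{clm} is the main technical point, because the Claim is stated for a single variable whereas the integrand depends on both $\alpha$ and $\beta$. I would handle this by integrating one coordinate at a time, which is legitimate since the integrand is nonnegative. Holding $\beta$ fixed, $\alpha\mapsto 2^{\I(\pi(\alpha):\pi(\beta))}$ is lower semi-continuous, so Claim \ref{clm} gives $\int_X 2^{\I(\pi(\alpha):\pi(\beta))}d\mathcal{P}(\alpha)=\int_{\IS}2^{\I(\alpha:\pi(\beta))}d\overline{\mathcal{P}}(\alpha)$. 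The resulting function $\beta\mapsto \int_{\IS}2^{\I(\alpha:\pi(\beta))}d\overline{\mathcal{P}}(\alpha)$ remains lower semi-continuous, because it is a monotone limit of sums of the lower semi-continuous functions $\m(y|\pi(\beta))$, so a second application of Claim \ref{clm} in the $\beta$ coordinate yields
\[
\int 2^{\I(\pi(\alpha):\pi(\beta))}d\mathcal{P}(\alpha)d\mathcal{Q}(\beta)=\int 2^{\I(\alpha:\beta)}d\overline{\mathcal{P}}(\alpha)d\overline{\mathcal{Q}}(\beta).
\]

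By the definition of information between probabilities over infinite sequences, the right-hand side is exactly $2^{\I(\overline{\mathcal{P}}:\overline{\mathcal{Q}})}$, so $\mathcal{I}(\mathcal{P}:\mathcal{Q})=\I(\overline{\mathcal{P}}:\overline{\mathcal{Q}})$. Since $\mathcal{P}$ and $\mathcal{Q}$ computable means $\overline{\mathcal{P}}$ and $\overline{\mathcal{Q}}$ are computable with $\langle\mathcal{P}\rangle=\langle\overline{\mathcal{P}}\rangle$ and $\langle\mathcal{Q}\rangle=\langle\overline{\mathcal{Q}}\rangle$, Proposition \ref{prp:infcomp} applies directly and gives $\I(\overline{\mathcal{P}}:\overline{\mathcal{Q}})\lea\i(\langle\overline{\mathcal{P}}\rangle:\langle\overline{\mathcal{Q}}\rangle)=\i(\langle\mathcal{P}\rangle:\langle\mathcal{Q}\rangle)$, which is the claim.

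I expect the only genuine obstacle to be the lower-semicontinuity bookkeeping in the two-variable reduction above, specifically confirming that integrating out one coordinate against a measure preserves lower semi-continuity in the remaining coordinate, so that Claim \ref{clm} can legitimately be applied in each coordinate in turn. Everything else is either an identity or a direct appeal to Proposition \ref{prp:infcomp}.
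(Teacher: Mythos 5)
Your proposal is correct and follows essentially the same route as the paper's proof: apply Claim \ref{clm} to transfer the integral from $X\times X$ to $\IS\times\IS$, identify the result as $2^{\I(\overline{\mathcal{P}}:\overline{\mathcal{Q}})}$, and conclude via Proposition \ref{prp:infcomp}. Your extra bookkeeping on lower semi-continuity and the coordinate-by-coordinate application of Claim \ref{clm} is a more careful rendering of a step the paper leaves implicit, not a different argument.
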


\begin{proof}
	By Claim \ref{clm},
	$$\mathcal{I}(\mathcal{P}:\mathcal{Q})=\int_X\int_X 2^{\I(\pi(\alpha):\pi(\beta))}d\mathcal{P}(\alpha)d\mathcal{Q}(\beta)=\int_{\IS}\int_{\IS}2^{\I(\alpha:\beta)}d\overline{\mathcal{P}}(\alpha)d\overline{\mathcal{Q}}(\beta).$$ The proposition then follows from Proposition \ref{prp:infcomp}.
\end{proof}

A random transition $\Lambda :X\times X\rightarrow \R_{\geq 0}$, is a functiom such that $\Lambda(\cdot|\beta)$ is a semi-measure over $X$ for each $\beta\in X$, and $\Lambda(B|\cdot)$ is a measurable function for each measurable set $B\in\mathcal{B}$. A random transition $\Lambda$ has a dual $\overline{\Lambda}$ random transition in the Cantor space where measurable set $M\subseteq\IS$, $\beta\in \IS$, $\overline{\Lambda}(M|\beta)=\Lambda(\pi^{-1}(M)|\pi^{-1}(\beta))$. If $\pi^{-1}(\beta)$ doesn't exist, then $\overline{\Lambda}(M|\beta)=0$. The set of all such $\beta$ with no $\pi^{-1}(\beta)$ is a measurable set of $\IS$. A random transition is computable if its dual is computable.

\begin{prp}
	\label{prp:transr}
	For lower semi-continuous $f:\IS\rightarrow\R_{\geq 0}\cup\infty$, $\gamma\in X$,\\ $\int_{X} f(\pi(\alpha))d\Lambda(\alpha|\gamma) = \int_{\IS}f(\alpha)d\overline{\Lambda}(\alpha|\pi(\lambda))$.	
\end{prp}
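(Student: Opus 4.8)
The plan is to recognize this as the transition-kernel analogue of Claim \ref{clm}: for fixed $\gamma$, $\Lambda(\cdot|\gamma)$ is a (semi-)measure over $X$, and the statement is simply that integrating a lower semi-continuous $f$ against its pushforward under $\pi$ agrees with integrating against the dual transition $\overline{\Lambda}(\cdot|\pi(\gamma))$. (Note the right-hand side should read $\pi(\gamma)$ rather than $\pi(\lambda)$.) Consequently, once $\overline{\Lambda}(\cdot|\pi(\gamma))$ is matched with the dual of the measure $\Lambda(\cdot|\gamma)$ in the sense used by Claim \ref{clm}, the result follows immediately from that claim.

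First I would establish the identity tying the two kernels together at the level of cylinders. Evaluating the dual transition at $\beta=\pi(\gamma)$ and $M=x\IS$, and using that $\pi$ is injective so that $\pi^{-1}(\pi(\gamma))=\gamma$, together with $\pi^{-1}(x\IS)=\sigma(x)$ from Definition \ref{dff:infoiso}, gives
\[
\overline{\Lambda}(x\IS\,|\,\pi(\gamma))=\Lambda(\sigma(x)\,|\,\gamma).
\]
This is exactly the defining relation $F(x)=\mathcal{P}(\sigma(x))$ of a dual, now with the role of $\mathcal{P}$ played by the measure $\Lambda(\cdot|\gamma)$ over $X$ and the role of $\overline{\mathcal{P}}$ played by $\overline{\Lambda}(\cdot|\pi(\gamma))$ over $\IS$. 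Since agreement on all cylinders determines the measure, $\overline{\Lambda}(\cdot|\pi(\gamma))$ is precisely the dual of $\Lambda(\cdot|\gamma)$.

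With this identification in hand, I would reproduce the step-function argument of Claim \ref{clm}. Writing $g(x)=\min_{x\sqsubset\alpha}f(\alpha)$ so that $f(\alpha)=\sup_{x\sqsubset\alpha}g(x)$, the definition of the integral and the cylinder identity give
\[
\int_{\IS}f(\alpha)d\overline{\Lambda}(\alpha|\pi(\gamma))=\lim_{n\rightarrow\infty}\sum_{x\in\BT^n}g(x)\overline{\Lambda}(x\IS|\pi(\gamma))=\lim_{n\rightarrow\infty}\sum_{x\in\BT^n}g(x)\Lambda(\sigma(x)|\gamma)=\int_X f(\pi(\alpha))d\Lambda(\alpha|\gamma),
\]
where the last equality uses that $\pi(\alpha)\in x\IS$ iff $\alpha\in\sigma(x)$, together with monotone convergence of the level-$n$ approximations to $f\circ\pi$.

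The points requiring care, and hence the main obstacle, are bookkeeping rather than conceptual. One must check that the proof of Claim \ref{clm} never invoked normalization of $\mathcal{P}$ to total mass one, so that it applies unchanged to the semi-measure $\Lambda(\cdot|\gamma)$; and one must confirm that the edge case in the definition of $\overline{\Lambda}$, where $\pi^{-1}(\beta)$ fails to exist, does not arise here, since we evaluate at $\beta=\pi(\gamma)$, for which $\pi^{-1}$ is defined by injectivity of $\pi$. Both are routine.
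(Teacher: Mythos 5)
Your proposal is correct and follows essentially the same route as the paper: the paper's proof uses the identical step-function construction $g(x)=\min_{x\sqsubset\alpha}f(\alpha)$ and passes through the same chain $\lim_n\sum_{x\in\BT^n}g(x)\Lambda(\sigma(x)|\gamma)=\lim_n\sum_{x\in\BT^n}g(x)\overline{\Lambda}(x|\pi(\gamma))$, with the cylinder identity $\overline{\Lambda}(x\IS|\pi(\gamma))=\Lambda(\sigma(x)|\gamma)$ used implicitly where you make it explicit. Your observation that the statement's $\pi(\lambda)$ should read $\pi(\gamma)$ is also correct.
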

\begin{proof}
	Let $g:\FS\rightarrow\R_{\geq 0}$, such that $g(x)=\min_{x\sqsubset \alpha}f(\alpha)$ and $f(\alpha)=\sup_{x\sqsubset\alpha}g(x)$. \\
	\begin{align*} &\int_{X} f(\pi(\alpha))d\Lambda(\alpha|\gamma)\\
	=& \lim_{n\rightarrow \infty}\sum_{x\in\BT^n} g(x)\Lambda(\sigma(x)|\gamma)\\
	=&\lim_{n\rightarrow\infty}\sum_{x\in \BT^n}g(x)\overline{\Lambda}(x|\pi(\gamma))\\
	=&\int_{\IS} f(\alpha)d\overline{\Lambda}(\alpha|\pi(\gamma)).
	\end{align*}
\end{proof}
\begin{thr}
	For probabilities $\mathcal{P}$, $\mathcal{Q}$, computable random transition $\Lambda$, $\mathcal{I}(\Lambda \mathcal{P}:\mathcal{Q})\lea \mathcal{I}(\mathcal{P}:\mathcal{Q})$. 
\end{thr}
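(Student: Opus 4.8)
The plan is to transfer the whole inequality to Cantor space through the duals and then invoke the infinite-sequence conservation Theorem~\ref{thr:genprobcomp}. The guiding observation is that, just as Proposition~\ref{prp:compgen} rests on the identity $\mathcal{I}(\mathcal{P}:\mathcal{Q})=\I(\overline{\mathcal{P}}:\overline{\mathcal{Q}})$ extracted from Claim~\ref{clm}, the transition version should reduce to the identity $\mathcal{I}(\Lambda\mathcal{P}:\mathcal{Q})=\I(\overline{\Lambda}\,\overline{\mathcal{P}}:\overline{\mathcal{Q}})$, after which Theorem~\ref{thr:genprobcomp} applied to the computable dual transition $\overline{\Lambda}$ and the probabilities $\overline{\mathcal{P}},\overline{\mathcal{Q}}$ over $\IS$ finishes the argument. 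Thus essentially all the work is in establishing that second identity; the conservation content itself is already packaged in Theorem~\ref{thr:genprobcomp}.

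First I would unfold the definitions. Since every integrand is nonnegative, Tonelli lets me write, using $\Lambda\mathcal{P}(\alpha)=\int_X\Lambda(\alpha|\gamma)d\mathcal{P}(\gamma)$,
\[
2^{\mathcal{I}(\Lambda\mathcal{P}:\mathcal{Q})}=\int_X\int_X\left(\int_X 2^{\I(\pi(\alpha):\pi(\beta))}\,d\Lambda(\alpha|\gamma)\right)d\mathcal{P}(\gamma)\,d\mathcal{Q}(\beta).
\]
For fixed $\beta$ and $\gamma$ the map $\alpha'\mapsto 2^{\I(\alpha':\pi(\beta))}$ is lower semi-continuous on $\IS$, so Proposition~\ref{prp:transr} converts the innermost integral into $\int_{\IS}2^{\I(\alpha':\pi(\beta))}d\overline{\Lambda}(\alpha'|\pi(\gamma))$. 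Writing $H(\gamma',\beta')=\int_{\IS}2^{\I(\alpha':\beta')}d\overline{\Lambda}(\alpha'|\gamma')$, the remaining expression is $\int_X\int_X H(\pi(\gamma),\pi(\beta))\,d\mathcal{P}(\gamma)\,d\mathcal{Q}(\beta)$. Applying Claim~\ref{clm} first in the $\beta$ variable and then in the $\gamma$ variable transfers both outer integrals to the duals, giving
\[
2^{\mathcal{I}(\Lambda\mathcal{P}:\mathcal{Q})}=\int_{\IS}\int_{\IS}\left(\int_{\IS}2^{\I(\alpha':\beta')}d\overline{\Lambda}(\alpha'|\gamma')\right)d\overline{\mathcal{P}}(\gamma')\,d\overline{\mathcal{Q}}(\beta').
\]
Recognizing the inner $\gamma'$-integral as the action $(\overline{\Lambda}\,\overline{\mathcal{P}})(\alpha')=\int_{\IS}\overline{\Lambda}(\alpha'|\gamma')d\overline{\mathcal{P}}(\gamma')$ identifies the right-hand side with $2^{\I(\overline{\Lambda}\,\overline{\mathcal{P}}:\overline{\mathcal{Q}})}$, which is the desired identity.

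To finish I would chain $\mathcal{I}(\Lambda\mathcal{P}:\mathcal{Q})=\I(\overline{\Lambda}\,\overline{\mathcal{P}}:\overline{\mathcal{Q}})\lea\I(\overline{\mathcal{P}}:\overline{\mathcal{Q}})=\mathcal{I}(\mathcal{P}:\mathcal{Q})$, where the middle inequality is Theorem~\ref{thr:genprobcomp} and the last equality is Claim~\ref{clm} as in Proposition~\ref{prp:compgen}. The main obstacle is the justification needed to apply Claim~\ref{clm} in the second display: one must check that $H$ is lower semi-continuous separately in each argument. Lower semi-continuity in $\beta'$ holds because $2^{\I(\alpha':\beta')}$ is lower semi-continuous in $\beta'$ (as $\m(y|\beta')$ is enumerable relative to $\beta'$) and nonnegative integration preserves this; lower semi-continuity in $\gamma'$ follows from computability of $\overline{\Lambda}$, which makes $\gamma'\mapsto\int g\,d\overline{\Lambda}(\cdot|\gamma')$ lower semi-continuous for lower semi-continuous $g\ge 0$. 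These two facts are exactly what let the single-variable Claim~\ref{clm} be applied iteratively, and they, together with the nonnegativity that validates every use of Tonelli, are the only nontrivial points; the rest is bookkeeping.
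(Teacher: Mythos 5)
Your proposal is correct, but it follows a genuinely different route from the paper's. The paper does not pass to the identity $\mathcal{I}(\Lambda\mathcal{P}:\mathcal{Q})=\I(\overline{\Lambda}\,\overline{\mathcal{P}}:\overline{\mathcal{Q}})$ and then cite Theorem~\ref{thr:genprobcomp}; instead it keeps the two outer integrals over $X$ throughout, uses Proposition~\ref{prp:transr} only to move the innermost integral to $\IS$, and then re-runs the conservation chain directly, applying Theorem~\ref{thr:infsimp} to replace $\I(\alpha:\pi(\beta))$ by $\I(\langle\alpha,\pi(\gamma)\rangle:\pi(\beta))$ and Theorem~\ref{thr:inf} to collapse the inner integral to $2^{\I(\pi(\gamma):\pi(\beta))}$. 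In effect the paper repeats the proof pattern of Theorem~\ref{thr:genprobcomp} in the general-space setting, whereas you factor the result through that theorem as a black box. Your version is more modular and isolates the only genuinely new content of the general-space statement, namely that the dual operation intertwines with the transition ($\overline{\Lambda\mathcal{P}}=\overline{\Lambda}\,\overline{\mathcal{P}}$ at the level of the information integral); the price is the extra measure-theoretic bookkeeping you correctly flag --- two applications of Claim~\ref{clm} each requiring lower semi-continuity of the relevant partial function of $H$, plus the observation that $\overline{\mathcal{P}}$ is concentrated on $\pi(X)$ so the convention $\overline{\Lambda}(\cdot|\beta)=0$ off $\pi(X)$ is harmless. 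The paper's version avoids proving the intertwining identity and needs Claim~\ref{clm} nowhere in this proof, but duplicates the algorithmic-information argument. Both yield the same additive constant, depending on $\overline{\Lambda}$ through $c_P$ in Theorem~\ref{thr:inf}; note also that the paper's definition ``a random transition is computable if its dual is computable'' is exactly what licenses your invocation of Theorem~\ref{thr:genprobcomp} for $\overline{\Lambda}$, so no gap arises there.
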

\begin{proof}
	\begin{align*}
	2^{\mathcal{I}(\Lambda \mathcal{P}:\mathcal{Q})}=&\int_X \int_X 2^{\I(\pi(\alpha):\pi(\beta))} d\Lambda \mathcal{P}(\alpha)d\mathcal{Q}(\beta)\\
	=&\int_X\left(\int_X\left(\int_X  2^{\I(\pi(\alpha):\pi(\beta))} \Lambda(\alpha|\gamma)d\mathcal{P}(\gamma)\right)d\alpha\right)d\mathcal{Q}(\beta)\\
	=&\int_X\left(\int_X\left(\int_X  2^{\I(\pi(\alpha):\pi(\beta))} d\Lambda(\alpha|\gamma)\right)d\mathcal{P}(\gamma)\right)d\mathcal{Q}(\beta)
	\end{align*}
	Using Proposition \ref{prp:transr},
	\begin{align*}
	2^{\mathcal{I}(\Lambda P:Q)}=&\int_X\left(\int_X\left(\int_{\IS} 2^{\I(\alpha:\pi(\beta))} d\overline{\Lambda}(\alpha|\pi(\gamma))\right)d\mathcal{P}(\gamma)\right)d\mathcal{Q}(\beta).
	\end{align*}
	Using Theorem \ref{thr:infsimp},
	\begin{align*}
2^{\mathcal{I}(\Lambda \mathcal{P}:\mathcal{Q})}\lem&\int_X\left(\int_X\left(\int_{\IS} 2^{(\langle\alpha,\pi(\gamma)\rangle:\pi(\beta))} d\overline{\Lambda}(\alpha|\pi(\gamma))\right)d\mathcal{P}(\gamma)\right)d\mathcal{Q}(\beta) \beta.
\end{align*}
Using Theorem \ref{thr:inf},
\begin{align*}
	2^{\mathcal{I}(\Lambda \mathcal{P}:\mathcal{Q})} \lem&\int_X\int_X 2^{\I(\pi(\gamma):\pi(\beta))} d\mathcal{P}(\gamma)d\mathcal{Q}(\beta)\eqm 2^{\mathcal{I}(\mathcal{P}:\mathcal{Q})}.
	 \end{align*}
\end{proof}

\begin{clm}
	For probability measures $p$, $q$ over finite sequences, infinite sequences, or general spaces, if $p$ is computable, $\I(p:q)\lea \K(p)$.
\end{clm}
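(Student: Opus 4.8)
\section*{Proof proposal}

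The plan is to handle the three settings ($\ii$ over $\FS$, $\I$ over $\IS$, and $\mathcal{I}$ over a general space) with one common strategy: use the computability of $p$ to \emph{contract the $p$-coordinate} via the conservation inequalities already proved, reducing the self-information to an integral of the form $2^{\i(\langle p\rangle:\,\cdot\,)}$ (resp.\ $2^{\I(\langle p\rangle:\,\cdot\,)}$) against $q$; then bound that integrand \emph{uniformly} by $2^{\K(p)}$; and finally use that $q$ is a (semi-)measure of total mass at most $1$.

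For the finite-sequence case, start from $\ii(p:q)=\log\sum_{x,y}2^{\i(x:y)}p(x)q(y)$. Applying Lemma \ref{lmm:consfunc} to recover $x$ from $\langle p,x\rangle$ gives $\i(x:y)\lea\i(\langle p,x\rangle:y)$, and then Lemma \ref{lem:consran} with $a=p$ and $\psi_p=p$ (semi-computable relative to $p$) contracts the $x$-sum, yielding
\[
\ii(p:q)\lea\log\sum_{y}2^{\i(\langle p\rangle:y)}q(y).
\]
I would then invoke the elementary fact $\i(a:b)\lea\K(a)$ (which holds since $\K(a,b)\gea\K(b)$), applied with $a=\langle p\rangle$, to get $\i(\langle p\rangle:y)\lea\K(p)$ uniformly in $y$; since $\sum_y q(y)\le 1$, this gives $\ii(p:q)\lea\K(p)$.

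For the infinite-sequence case, start from $2^{\I(P:Q)}=\int\int 2^{\I(\alpha:\beta)}dP(\alpha)dQ(\beta)$ and carry out the first two steps of the proof of Proposition \ref{prp:infcomp} on the $\alpha$-coordinate only: Theorem \ref{thr:infsimp} replaces $\alpha$ by $\langle\alpha,P\rangle$, and Theorem \ref{thr:inf} (using that $P$ is computable) integrates $\alpha$ out for each fixed $\beta$, leaving the arbitrary measure $Q$ on the $\beta$-coordinate untouched; this gives $2^{\I(P:Q)}\lem\int 2^{\I(\langle P\rangle:\beta)}dQ(\beta)$. The new ingredient is a uniform bound $\I(w:\beta)\lea\K(w)$ for every finite string $w$ and every $\beta\in\IS$. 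To prove it, contract the inner $x$-sum of $\I(w:\beta)=\log\sum_{x,y}\m(x|w)\m(y|\beta)2^{\i(x:y)}$ exactly as above (Lemma \ref{lmm:consfunc}, then Lemma \ref{lem:consran} with $\psi_w=\m(\cdot|w)$) to get $\sum_x\m(x|w)2^{\i(x:y)}\lem 2^{\i(w:y)}$, then apply $\i(w:y)\lea\K(w)$ and $\sum_y\m(y|\beta)\le 1$. Feeding this back in, uniform in $\beta$, and using $Q(\IS)\le 1$ yields $\I(P:Q)\lea\K(P)$. The general-space case then reduces to this one: by Claim \ref{clm} applied in both coordinates (as in Proposition \ref{prp:compgen}, using that $2^{\I(\cdot:\cdot)}$ is lower semicontinuous) one has $\mathcal{I}(\mathcal{P}:\mathcal{Q})=\I(\overline{\mathcal{P}}:\overline{\mathcal{Q}})$, and since $\mathcal{P}$ is computable iff $\overline{\mathcal{P}}$ is, with $\K(\mathcal{P})=\K(\overline{\mathcal{P}})$, the previous case gives $\mathcal{I}(\mathcal{P}:\mathcal{Q})\lea\K(\mathcal{P})$.

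The main obstacle is the auxiliary bound $\I(w:\beta)\lea\K(w)$ for an arbitrary, possibly non-computable $\beta$. The crude estimate $\i(x:y)\lea\K(x)$ cannot be used termwise, because $\sum_x\m(x|w)2^{\K(x)}$ diverges. The fix is to contract the $x$-sum \emph{first} (turning it into $2^{\i(w:y)}$ through the conservation inequality of Lemma \ref{lem:consran}) and only afterwards discard the $y$-dependence via $\i(w:y)\lea\K(w)$, at which point the surviving mass $\sum_y\m(y|\beta)$, and likewise the total mass of $q$ or $Q$, contributes only an additive constant.
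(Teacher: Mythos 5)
Your proposal is correct and follows essentially the same route as the paper: contract the $p$-coordinate via Lemma \ref{lem:consran} to reduce to $\log\sum_y 2^{\i(\langle p\rangle:y)}q(y)$, then bound the exponent uniformly by $\K(p)$ and use that $q$ has total mass at most $1$. The paper only writes this out for the finite-sequence case and asserts the other cases "follow similarly"; your treatment of the infinite-sequence and general-space cases (including the auxiliary uniform bound $\I(w:\beta)\lea\K(w)$ for arbitrary $\beta$) correctly supplies the omitted details.
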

\begin{proof}
We prove it for the finite sequence case, and the other cases follow similarly. By Lemma \ref{lem:consran}, $\ii(p:q)=\log \sum_{x,y}2^{\i(x:y)}p(x)q(y)\lea \log \sum_y2^{\i(\langle p\rangle :y)}q(y) \lea \log 2^{\K(p)}$.
\end{proof}
\begin{exm}[Convolutions on the real line]
	\label{exm:rl}
One example topology to be used throughout this paper is the real line, $(\R,\mathcal{B}_\R,\nu)$, where $\nu$ is the set of all intervals $\{(a,b)\}$, $a,b\in\Q$. For such topology, the Gaussian distribution $\mathcal{N}(\mu,\sigma^2)$ is computable, and thus by Theorem \ref{thr:genprobcomp}, has self information bounded by $\mathcal{I}(\mathcal{N}(\mu,\sigma^2):\mathcal{N}(\mu,\sigma^2))\lea \K(\mu,\sigma^2)$. Similarly the self-information of parameterized distributions over $\R$ will be less than the complexity of their encoded parameters. 

A convolution of a probability $\mathcal{P}$ over $\R$ with a probability kernel $\mathcal{F}$ produces a new probability $(\mathcal{P}\star\mathcal{F})(x) = \int_{-\infty}^\infty \mathcal{P}(y)\mathcal{F}(x-y)dy$. Convolution is a random transition, and it is computable if $\mathcal{F}:\R\rightarrow\R_{\geq 0}$ is computable. The pdf of the sum of two random variables is the convolution of their respective pdfs.
\end{exm}
\begin{cor}
For probability $\mathcal{P}$ over $\R$, computable probability kernel $\mathcal{F}:\R\rightarrow\R_{\geq 0}$,\\ $\mathcal{I}(\mathcal{P}\star\mathcal{F}\,{:}\mathcal{P}\star\mathcal{F})\,\lea \mathcal{I}(\mathcal{P}\,{:}\,\mathcal{P})$.
\end{cor}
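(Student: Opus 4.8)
The plan is to recognize the corollary as a direct consequence of the conservation theorem for computable random transitions over general spaces (the theorem immediately preceding Example~\ref{exm:rl}), applied twice and combined with the approximate symmetry of the information functional. By Example~\ref{exm:rl}, convolution with a computable probability kernel $\mathcal{F}$ is a computable random transition $\Lambda$ on $(\R,\mathcal{B}_\R,\nu)$, with $\Lambda\mathcal{P}=\mathcal{P}\star\mathcal{F}$ and each $\Lambda(\cdot\mid\gamma)$ the measure whose density is $x\mapsto\mathcal{F}(x-\gamma)$. So the conservation theorem applies verbatim with $\Lambda$ this convolution transition.

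First I would apply the theorem with the second argument set to $\Lambda\mathcal{P}=\mathcal{P}\star\mathcal{F}$, which gives $\mathcal{I}(\mathcal{P}\star\mathcal{F}:\mathcal{P}\star\mathcal{F})\lea\mathcal{I}(\mathcal{P}:\mathcal{P}\star\mathcal{F})$, i.e.\ conservation in the first coordinate. The remaining task is to bound $\mathcal{I}(\mathcal{P}:\mathcal{P}\star\mathcal{F})$ by $\mathcal{I}(\mathcal{P}:\mathcal{P})$, which requires transferring the conservation inequality to the second coordinate.

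To do this I would invoke the symmetry of the information functional. Since $\i(x:y)\eqa\i(y:x)$ follows from $\i(x:y)=\K(x)+\K(y)-\K(x,y)$ together with $\K(x,y)\eqa\K(y,x)$, the integral definitions propagate this symmetry to $\I(\alpha:\beta)\eqa\I(\beta:\alpha)$ and hence to $\mathcal{I}(\mathcal{P}:\mathcal{Q})\eqa\mathcal{I}(\mathcal{Q}:\mathcal{P})$ for arbitrary measures. Combining the pieces, $\mathcal{I}(\mathcal{P}:\mathcal{P}\star\mathcal{F})\eqa\mathcal{I}(\mathcal{P}\star\mathcal{F}:\mathcal{P})\lea\mathcal{I}(\mathcal{P}:\mathcal{P})$, where the final inequality is the conservation theorem once more, now with the second argument set to $\mathcal{P}$. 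Chaining the two bounds yields $\mathcal{I}(\mathcal{P}\star\mathcal{F}:\mathcal{P}\star\mathcal{F})\lea\mathcal{I}(\mathcal{P}:\mathcal{P})$, as claimed.

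The main obstacle is the symmetry step: the conservation theorem only processes the first argument, so the whole argument hinges on a clean proof of $\mathcal{I}(\mathcal{P}:\mathcal{Q})\eqa\mathcal{I}(\mathcal{Q}:\mathcal{P})$, tracking that the additive symmetry constant of $\i$ becomes a multiplicative $2^{O(1)}$ factor inside the integral against the (possibly non-computable) measures and contributes only an additive $O(1)$ after the logarithm. A secondary point is verifying that convolution genuinely satisfies the computable-random-transition definition — measurability of $\Lambda(B\mid\cdot)$ and computability of the dual $\overline{\Lambda}$ given a computable $\mathcal{F}$ — but this is precisely what Example~\ref{exm:rl} asserts, so it may be cited rather than reproved.
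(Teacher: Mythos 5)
Your proposal is correct and is exactly the derivation the paper intends: the corollary is stated without proof as an immediate consequence of the preceding conservation theorem for computable random transitions, and the two-sided application via the additive symmetry $\mathcal{I}(\mathcal{P}:\mathcal{Q})\eqa\mathcal{I}(\mathcal{Q}:\mathcal{P})$ (inherited from $\i(x:y)\eqa\i(y:x)$ through the integral definitions) is the step the paper leaves implicit. You correctly identify both that step and the need to check that convolution with a computable kernel is a computable random transition, which the paper's Example asserts.
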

	
Let $\mathcal{G}\sim\mathcal{N}(0,\sigma^2)$ be a Gaussian distribution over $\R$. Thus convolution of a signal (probability measure) $\mathcal{P}$ with $\mathcal{G}$ results in smoothing of $\mathcal{P}$, proportional to $\sigma^2$. By the above corollary, a smoothing of any signal (computable or not) will result in a decrease of self information.
\section{Information Transfer}
In this section, we show how information between probabilities on general spaces can be related to the information between probabilities of finite sequences through the use of disjoint, enumerable open sets.

\begin{dff}[Information, Infinite Sequences, Extended]
For $x,y\in\FS$, $\om(x|y)=\sum\{2^{-\|p\|}:U_y(p)=x\textrm{ and at most $\|y\|$ bits are read from the auxilliary tape}\}$. For $x\in\FS$, $\alpha\in\IS$, $\om(x|\alpha)=\m(x|\alpha)$. For $\alpha,\beta\in\FS\cup\IS$, $\oI(\alpha:\beta)=\log\sum_{x,y\in\FS}\om(x|\alpha)\om(y|\beta)2^{\i(x:y)}$. If $x\sqsubseteq y$, $\oI(x:\beta)\leq \oI(y:\beta)$ and  $\lim_{n\rightarrow\infty}\oI(\alpha[0..n]:\beta)=\oI(\alpha:\beta)=\I(\alpha:\beta)$.
\end{dff}

\begin{dff}[Prefix monotone function]
	 \mbox{$\sqsubseteq$-$\sup$} is the supremum under the partial order of $\sqsubseteq$ on $\FIS$. Say function $\nu\,{:}\,\FS\,{\rightarrow}\,\FS$ has the property that for all $p,q\in\FS$,  $\nu(p)\,{\sqsubseteq}\,\nu(pq)$. Then $\overline{\nu}\,{:}\,\FIS\,{\rightarrow}\,\FIS$ denotes the unique extension of $\nu$, where $\overline{\nu}(p)= \sqsubseteq$-$\sup\, \{\nu(p_{\leq n})\,{:}\,n\,{\leq}\,\|p\|,n\,{\in}\,\N\}$ for all $p\,{\in}\,\FIS$. These extensision are called prefix monotone functions.
\end{dff}
\begin{clm}
	\label{clm:pm}
	For computable prefix-monotone function $A$, $\alpha,\beta\in \FS\cup\IS$, $\oI(A(\alpha):\beta)\lea \oI(\alpha:\beta)$. 
	\end{clm}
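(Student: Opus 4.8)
The plan is to reduce everything to a single pointwise domination on the first argument: for a computable prefix-monotone $A$ and any $\alpha\in\FS\cup\IS$,
\[
\om(x\,|\,A(\alpha))\lem\om(x\,|\,\alpha),
\]
uniformly in $x$ with a multiplicative constant depending only on $A$. Granting this, the claim is immediate, since $\beta$ and the factor $2^{\i(x:y)}$ are untouched spectators:
\[
2^{\oI(A(\alpha):\beta)}=\sum_{x,y}\om(x|A(\alpha))\,\om(y|\beta)\,2^{\i(x:y)}\lem\sum_{x,y}\om(x|\alpha)\,\om(y|\beta)\,2^{\i(x:y)}=2^{\oI(\alpha:\beta)},
\]
and taking logarithms yields $\oI(A(\alpha):\beta)\lea\oI(\alpha:\beta)$. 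Note that no conservation property of $\i$ is invoked; all the content lives in processing the conditioning sequence, exactly because the map acts only on the left coordinate.

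To prove the domination I would build a simulator. Let $\nu$ be the monotone base function with $A=\overline{\nu}$, and construct a machine $V$ that, given a program $q$ and the conditioning sequence $\alpha$ on its one-way read-only auxiliary tape, simulates $U_{A(\alpha)}(q)$. The machine $V$ reads $\alpha$ one bit at a time; after seeing a prefix $\alpha[0..k]$ it has computed $\nu(\alpha[0..k])$, which by prefix-monotonicity is a $\sqsubseteq$-prefix of $A(\alpha)$, and it feeds these bits to the simulated computation as that computation's conditional input. Whenever the simulation requests a conditional bit not yet produced, $V$ reads one more bit of $\alpha$ and recomputes; since the simulated $U_{A(\alpha)}(q)$ is entitled to read at most $\|A(\alpha)\|$ bits, every bit it demands is legitimate and is eventually supplied. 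By universality of $U$ we then get $\om(x|\alpha)\gem\om_V(x|\alpha)\ge\om(x|A(\alpha))$, where $\om_V$ is the $V$-analogue of $\om$ and the overhead constant is $2^{O(\K(A))}=2^{O(1)}$.

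The one delicate point, which I expect to be the main obstacle, is the auxiliary-bit accounting: one must check that $V$ never reads past the first $\|\alpha\|$ bits of its tape while still supplying every conditional bit the simulation demands. For finite $\alpha=p$ this is where prefix-monotonicity is essential, since $\nu(p_{\le\|p\|})=A(p)$ guarantees that once $V$ has read all $\|p\|$ bits it holds all of $A(p)$, so the at most $\|A(p)\|$ conditional bits are all available after reading at most $\|p\|$ aux bits; hence each $q$ counted in $\om(x|A(p))$ is counted in $\om_V(x|p)$. For infinite $\alpha$ the simulation reads only finitely many conditional bits, each produced from a finite prefix of $\alpha$, which is precisely what $\om(\cdot|\alpha)=\m(\cdot|\alpha)$ allows; when moreover $A(\alpha)$ is infinite this regime is also covered directly by Theorem \ref{thr:infsimp}. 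The borderline sub-case $\alpha\in\IS$ with $A(\alpha)$ finite deserves a sentence of care, since one cannot decide from a finite prefix that $\nu$ has stabilized; but the domination argument needs no such decision, as $V$ merely supplies correct prefix bits on demand and the simulation halts after reading at most $\|A(\alpha)\|$ of them regardless.
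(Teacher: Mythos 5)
The paper states Claim \ref{clm:pm} with no proof at all, so there is no argument of the paper's to compare yours against; judged on its own, your proposal is correct and supplies exactly the missing content. The reduction to the pointwise domination $\om(x|A(\alpha))\lem\om(x|\alpha)$ is the right move: the weights $\om(y|\beta)2^{\i(x:y)}$ are nonnegative spectators, so the sum inherits the multiplicative bound and the logarithm turns it into the claimed additive one, with constant depending only on $\K(A)$. Your simulator also isolates the one issue that is genuinely new relative to Theorem \ref{thr:infsimp}, namely the auxiliary-bit budget in the finite-string clause of the definition of $\om$: because $V$ reads a fresh bit of $\alpha$ only when the simulated $U_{A(\alpha)}$ demands a conditional bit not yet covered by $\nu(\alpha[0..k])$, and because for finite $\alpha=p$ every demand made by a program counted in $\om(x|A(p))$ lies below $\|\nu(p)\|=\|A(p)\|$, the machine $V$ reads at most $\|p\|$ auxiliary bits, which is precisely what $\om(\cdot|p)$ permits; programs that overdraw the budget simply cause $V$ to diverge, which is harmless. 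Your handling of the mixed case ($\alpha\in\IS$ with $A(\alpha)$ finite) is likewise correct, since $V$ never needs to decide that $\nu$ has stabilized, only to serve finitely many requests, each answerable from a finite prefix of $\alpha$. Two conventions you rely on should be made explicit in a full write-up: (i) the inequality $\om(x|\alpha)\gem\om_V(x|\alpha)$ requires the universal machine's simulation of $V$ to be transparent with respect to auxiliary-tape reads (it consumes an auxiliary bit only when $V$ does), which is the convention already implicit in the paper's definition of $\om$; and (ii) $V$ is a legitimate prefix machine because its halting set on programs is contained in the prefix-free domain of $U_{A(\alpha)}$, so $\om_V$ really is dominated by the universal $\om$.
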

\begin{dff}
A series of open sets $\{\phi_i\}$ is enumerable if $\{(i,j) : \sigma(j)\subseteq \phi_i\}$ is enumerable. 
\end{dff}
\begin{thr}
	\label{thr:infotrans}
	For enumeration $\{\phi_i\}$ of disjoint open elements, if two probability measures $\mathcal{P}$ and $\mathcal{Q}$ have corresponding semi-measures $p$ and $q$ over $\N$, where $p(i)\leq\mathcal{P}(\phi_i)$ and $q(i)\leq\mathcal{Q}(\phi_j)$, then $\mathcal{I}(\mathcal{P}:\mathcal{Q})\gea \ii(p:q)$.
\end{thr}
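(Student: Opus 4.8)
The plan is to reduce the statement to the already-established finite-sequence inequality by pushing both measures through a single computable prefix-monotone map $A$ that decodes, from $\pi(\alpha)$, the index $i$ of the unique set $\phi_i$ containing $\alpha$. Conservation for prefix-monotone functions (Claim~\ref{clm:pm}) together with the folklore identity $\oI(\langle i\rangle0^\infty:\langle i'\rangle0^\infty)\eqa\i(i:i')$ (Proposition~\ref{prp:folk}) will then convert a pointwise lower bound on the integrand of $\mathcal{I}(\mathcal{P}:\mathcal{Q})$ into the quantity $\ii(p:q)$, after restricting the double integral to $\bigcup_i\phi_i$, which only decreases the nonnegative integrand.

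First I would construct $A$. Since each $\phi_i$ is open, if $\alpha\in\phi_i$ then $\phi_i$ contains a basis element $\nu(k)\ni\alpha$, so $\pi(\alpha)_k=1$ and every sufficiently long prefix $w\sqsubseteq\pi(\alpha)$ satisfies $\sigma(w)\subseteq\nu(k)\subseteq\phi_i$; hence a finite ``committing'' prefix always exists. The enumerability hypothesis makes the relation $\sigma(w)\subseteq\phi_i$ semidecidable, so $A$ reads $\pi(\alpha)$ bit by bit, dovetails the enumeration of $\{(i,w):\sigma(w)\subseteq\phi_i\}$, and upon finding a pair whose $w$ is a prefix of the input read so far commits to printing $\langle i\rangle$ and then $0$'s, yielding $A(\pi(\alpha))=\langle i\rangle0^\infty$. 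Disjointness forces this $i$ to be unique: any such $w\sqsubseteq\pi(\alpha)$ gives $\alpha\in\sigma(w)\subseteq\phi_i$, and $\alpha$ lies in at most one $\phi_i$; moreover every prefix of $\pi(\alpha)$ has $\alpha\in\sigma(w)$, so the cylinder is nonempty and the spurious case $\sigma(w)=\emptyset\subseteq\phi_{i'}$ cannot arise on a genuine input. Appending zeros as further bits are read makes $A$ monotone with a well-defined extension to $\FIS$, so $A$ is computable and prefix-monotone.

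With $A$ in hand I apply Claim~\ref{clm:pm} in each argument in turn, using the additive-constant symmetry of $\oI$ inherited from $\i(x:y)\eqa\i(y:x)$, to obtain $\oI(A(\pi(\alpha)):A(\pi(\beta)))\lea\oI(\pi(\alpha):\pi(\beta))$. On $\phi_i\times\phi_{i'}$ the left side equals $\oI(\langle i\rangle0^\infty:\langle i'\rangle0^\infty)\eqa\i(i:i')$ by Proposition~\ref{prp:folk}, so $2^{\oI(\pi(\alpha):\pi(\beta))}\gem 2^{\i(i:i')}$ there. Since $\I=\oI$ on infinite sequences, I write $2^{\mathcal{I}(\mathcal{P}:\mathcal{Q})}=\int_X\int_X2^{\oI(\pi(\alpha):\pi(\beta))}d\mathcal{P}(\alpha)d\mathcal{Q}(\beta)$, restrict to the disjoint sets, and bound
\[
2^{\mathcal{I}(\mathcal{P}:\mathcal{Q})}\geq\sum_{i,i'}\int_{\phi_i}\int_{\phi_{i'}}2^{\oI(\pi(\alpha):\pi(\beta))}\,d\mathcal{P}\,d\mathcal{Q}\;\gem\;\sum_{i,i'}2^{\i(i:i')}\mathcal{P}(\phi_i)\mathcal{Q}(\phi_{i'})\;\geq\;\sum_{i,i'}2^{\i(i:i')}p(i)q(i')=2^{\ii(p:q)},
\]
using $p(i)\leq\mathcal{P}(\phi_i)$ and $q(i')\leq\mathcal{Q}(\phi_{i'})$ in the final step. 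Taking logarithms yields $\mathcal{I}(\mathcal{P}:\mathcal{Q})\gea\ii(p:q)$.

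The main obstacle is the construction and verification of $A$, not the final assembly. I must check that the enumerability hypothesis genuinely lets $A$ detect a containment $\sigma(w)\subseteq\phi_i$ at a finite prefix of the true input (existence of the committing prefix, which rests on openness of $\phi_i$ and the basis refining the cylinders), that disjointness forces the decoded index to be unique so the output is monotone and consistent, and that the empty-cylinder case cannot corrupt the decoding on inputs of the form $\pi(\alpha)$. Once $A$ is in place, everything else follows from Claim~\ref{clm:pm}, Proposition~\ref{prp:folk}, and the monotone convergence already used for Claim~\ref{clm} and Proposition~\ref{prp:transr}, which are routine.
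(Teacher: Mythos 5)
Your proposal is correct and follows essentially the same route as the paper: both construct a computable prefix-monotone decoder $A$ that commits to $\langle i\rangle$ once a prefix of $\pi(\alpha)$ certifies membership in $\phi_i$ (using enumerability for semidecidability and disjointness for uniqueness), then apply Claim~\ref{clm:pm} and the bounds $p(i)\leq\mathcal{P}(\phi_i)$, $q(j)\leq\mathcal{Q}(\phi_j)$ to the restricted double integral. The only cosmetic difference is that you pad the output to $\langle i\rangle0^\infty$ and invoke Proposition~\ref{prp:folk}, where the paper uses the extended $\oI$ on finite strings directly; your construction of $A$ is in fact spelled out more carefully than the paper's.
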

\begin{proof}
Let $A:\FS\cup\IS\rightarrow\FS\cup\IS$, be a computable prefix monotone function mapping each $\alpha\in \IS$ to $\langle i\rangle$ if there exists $j\in\N$ where $\alpha[j]=1$ and $\nu_j\subseteq \phi_i$. All other sequences are $A$-mapped to $\emptyset$. This function is well defined because $\{\phi_i\}$ are disjoint. The function $A$ is computable because each open set $\phi_i\subseteq X$ is uniformly enumerable. By Claims \ref{clm} and \ref{clm:pm}, 
	\begin{align*}
	&	 2^{\mathcal{I}(\mathcal{P}:\mathcal{Q})}\\
	=&\int_{\IS}\int_{\IS}2^{\oI(\alpha:\beta)}d\overline{\mathcal{P}}(\alpha)\overline{\mathcal{Q}}(\beta)\\
\gem&\int_{\IS}\int_{\IS}2^{\oI(A(\alpha):A(\beta))}d\overline{\mathcal{P}}(\alpha)\overline{\mathcal{Q}}(\beta)\\ \\
\gem&\sum_{i,j\in\N}\int_{\alpha:\pi^{-1}(\alpha)\in\phi_i}\int_{\beta:\pi^{-1}(\beta)\in\phi_j}2^{\oI(\langle i\rangle:\langle j\rangle)}d\mathcal{P}(\pi^{-1}\alpha)d\mathcal{Q}(\pi^{-1}\beta)\\
\eqm& \sum_{i,j\in\N} 2^{\oI(\langle i\rangle:\langle j\rangle)}\mathcal{P}(\phi_i)\mathcal{Q}(\phi_j)\\
	\gem &\sum_{i,j}2^{\I(i:j)}p(i)q(j)\\
	\eqm& 2^{\I(p:q)}.
	\end{align*}
\end{proof}
The above theorem is true for the case of infinite sequences. Let $\{\phi_i\}$, $\{\theta_i\}$ be enumerations of disjoint open sets on $\IS$, and $P$, $Q$, be probability measures on $\IS$. Then for semi measures $p(i)\leq P(\phi_i)$ and $q(i)\leq Q(\theta_i)$, we have $\I(P:Q)\gea \ii(p:q)$.
\begin{exm}
For any probability measure $\mathcal{P}$ over $\R$, $\mathcal{I}(\mathcal{P}:\mathcal{P})\gea \K(n)+2\log \mathcal{P}(\mathcal{U}(n,n+1))$, where $\mathcal{U}(n,n+1)$ is the uniform distribution. This is due to the enumeration of $\{(n,n+1)\}$.
\end{exm}
\begin{exm}[Random Pulse]
	Let $\alpha\in\IS$ be a Martin L\"{o}f random sequence, where $\K(\alpha[0..n])\gea n$.
	Using the real line topology of Example \ref{exm:rl}, let $\mathcal{N}_n = \mathcal{N}(0.\alpha, n^{-2})$ be a Gaussian distribution centered at a random point between 0 and 1 and with variance inversely proportional to $n$. We show that $\lim_{n\rightarrow\infty}\mathcal{I}(\mathcal{N}_n:\mathcal{N}_n)=\infty$. This limit is known as the (offseted) Dirac delta function.
	
	Let $\nu \in (0,0.5)$ and $\{\phi^m_i\}$ be $2^m$ open intervals  where $\phi_i^m =((i-1)2^{-m},i2^{-m})$, for $i\in[1,2^m]$. Each $i$ can be associated with a string $x$ of length $m$, where $s(x)=i$. For each $m$, there is an $n$, such that $\mathcal{N}_n(\phi_{s(\alpha[m])}^m)>\nu$. Thus, using Theorem \ref{thr:infotrans} (noting that the additive constant is on the order of the complexity of the enumeration) for each $m$, there is an $n$, where $\mathcal{I}(\mathcal{N}_n,\mathcal{N}_n) \gea \K(s(\alpha[m]))+2\log \nu-O(\K(\{\phi_i^m\}))\gea m + 2\log \nu-O(\K(m))$. Thus $\lim_{n\rightarrow\infty}\mathcal{I}(\mathcal{N}_n:\mathcal{N}_n)=\infty$.
\end{exm}
\begin{exm}
	For rational $\sigma^2$, we show that for $n\in\N$, $\mathcal{I}(\mathcal{N}(n,\sigma^2):\mathcal{N}(n,\sigma^2))\eqa \K(n)$. Let $\mathcal{N}_n=\mathcal{N}(n,\sigma^2)$.
	Indeed, due to Proposition \ref{prp:compgen}, $\mathcal{I}(\mathcal{N}_n:\mathcal{N}_n)\lea \K(n,\sigma^2)\eqa \K(n)$. For the other direction let $\{\phi_i\}$ be an enumeration of open intervals, where $\phi_i=(i,i+1)$. Then $\mathcal{N}_n(\phi_n)>c_{\sigma}$, for some $c_\sigma\in \R_{\geq 0}$, dependent solely on $\sigma^2$. Due to Theorem \ref{thr:infotrans} on $\mathcal{N}_n$ there is a semi-measure $p$ over $\N$, where $p(n)=c_\sigma$, and $\mathcal{I}(\mathcal{N}_n:\mathcal{N}_n)\gea \ii(p:p)\gea \K(n)+2\log p(n)\eqa \K(n)$. 
\end{exm}

\section{Computable Covers}
Given a topology $(X,\sigma,\mathcal{B},\nu)$, a (not necessarily probability) measure $\rho$ covers measure $\mu$ if $\rho(B)\geq \mu(B)$ for all measurable sets $B\in\mathcal{B}$.
\begin{thr}
	If computable measures $\mathcal{M}$ and $\mathcal{R}$ cover probability measures $\mathcal{P}$, $\mathcal{Q}$, then $\mathcal{I}(\mathcal{P}:\mathcal{Q})\lea \i(\langle \mathcal{M}\rangle:\langle \mathcal{R}\rangle)+\log \mathcal{M}(X)\mathcal{R}(X)$.
\end{thr}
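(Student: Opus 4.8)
The plan is to push everything through the dual measures on Cantor space and then reduce to the already-established computable case, Proposition \ref{prp:compgen}. First I would rewrite the left-hand side using Claim \ref{clm}. Since $\alpha\mapsto\m(x|\alpha)$ depends only on finite prefixes of $\alpha$ and is therefore lower semicontinuous, the integrand $2^{\I(\alpha:\beta)}=\sum_{x,y}\m(x|\alpha)\m(y|\beta)2^{\i(x:y)}$ is a nonnegative lower semicontinuous function of $(\alpha,\beta)$, so
$$2^{\mathcal{I}(\mathcal{P}:\mathcal{Q})}=\int_{\IS}\int_{\IS}2^{\I(\alpha:\beta)}d\overline{\mathcal{P}}(\alpha)d\overline{\mathcal{Q}}(\beta).$$

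The covering hypothesis $\mathcal{M}(B)\geq\mathcal{P}(B)$ for all Borel $B$ gives in particular $\overline{\mathcal{M}}(x\IS)=\mathcal{M}(\sigma(x))\geq\mathcal{P}(\sigma(x))=\overline{\mathcal{P}}(x\IS)$ for every cylinder, i.e. cylinder domination of the duals, and likewise $\overline{\mathcal{R}}\geq\overline{\mathcal{Q}}$. I would then invoke the cylinder-sum representation used in the proof of Claim \ref{clm}: for lower semicontinuous $f\geq 0$, one has $\int_{\IS}f\,d\overline{\mathcal{P}}=\lim_n\sum_{x\in\BT^n}g(x)\overline{\mathcal{P}}(x)$ with $g(x)=\min_{x\sqsubset\alpha}f(\alpha)\geq 0$. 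Fixing $\beta$ and applying this in the $\alpha$ variable, cylinder domination lets me replace $\overline{\mathcal{P}}$ by $\overline{\mathcal{M}}$ at the cost only of an inequality; the resulting $\alpha$-integral is again lower semicontinuous in $\beta$, so repeating the argument in the $\beta$ variable yields
$$2^{\mathcal{I}(\mathcal{P}:\mathcal{Q})}\leq\int_{\IS}\int_{\IS}2^{\I(\alpha:\beta)}d\overline{\mathcal{M}}(\alpha)d\overline{\mathcal{R}}(\beta).$$

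Next I would normalize. Because $\mathcal{M}$ is computable, its total mass $\mathcal{M}(X)$ is the computable real obtained by evaluating its associated Carath\'eodory function at the empty string, and $\mathcal{M}(X)\geq\mathcal{P}(X)=1>0$; hence $\mathcal{M}'=\mathcal{M}/\mathcal{M}(X)$ and $\mathcal{R}'=\mathcal{R}/\mathcal{R}(X)$ are computable probability measures. Factoring the constants out of the integral gives $2^{\mathcal{I}(\mathcal{P}:\mathcal{Q})}\leq\mathcal{M}(X)\mathcal{R}(X)\,2^{\mathcal{I}(\mathcal{M}':\mathcal{R}')}$. Proposition \ref{prp:compgen} bounds $\mathcal{I}(\mathcal{M}':\mathcal{R}')\lea\i(\langle\mathcal{M}'\rangle:\langle\mathcal{R}'\rangle)$, and since $\langle\mathcal{M}'\rangle$ is computed from $\langle\mathcal{M}\rangle$ by the constant-complexity map that divides the function of $\mathcal{M}$ by its value at the empty string (and likewise for $\mathcal{R}$), two applications of Lemma \ref{lmm:consfunc} give $\i(\langle\mathcal{M}'\rangle:\langle\mathcal{R}'\rangle)\lea\i(\langle\mathcal{M}\rangle:\langle\mathcal{R}\rangle)$. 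Taking logarithms assembles the claimed bound $\mathcal{I}(\mathcal{P}:\mathcal{Q})\lea\i(\langle\mathcal{M}\rangle:\langle\mathcal{R}\rangle)+\log\mathcal{M}(X)\mathcal{R}(X)$.

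The main obstacle is the measure-swapping step: the covering hypothesis is a statement about all Borel sets, whereas I only directly extract domination on cylinders, so the argument genuinely relies on the lower semicontinuity of $2^{\I(\alpha:\beta)}$ to reduce the integral to a monotone limit of cylinder sums, on which cylinder domination suffices. Care is also needed to confirm that integrating the jointly lower semicontinuous integrand against the fixed measure $\overline{\mathcal{M}}$ leaves a lower semicontinuous function of $\beta$, so that the representation may be applied a second time; and the normalization bookkeeping must verify that dividing by the computable total mass is a genuinely computable, constant-complexity operation on the encodings.
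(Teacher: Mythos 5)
Your proposal is correct and follows essentially the same route as the paper: dominate the integral by replacing $\mathcal{P},\mathcal{Q}$ with the covers, pull out the total masses to normalize to computable probability measures, apply Proposition \ref{prp:compgen}, and use Lemma \ref{lmm:consfunc} to transfer the information bound from the encodings of the normalized measures back to $\langle\mathcal{M}\rangle,\langle\mathcal{R}\rangle$. The only cosmetic difference is that the paper performs the measure swap directly over $X$ (where the covering hypothesis on all Borel sets makes monotonicity of the integral immediate for any nonnegative integrand), so your detour through cylinder domination and lower semicontinuity, while valid, is not needed.
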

\begin{proof}
	We define computable probability measures, $m=\mathcal{M}/\mathcal{M}(X)$ and $r=\mathcal{R}/\mathcal{R}(X)$.Then 
	\begin{align*}
	&\mathcal{I}(\mathcal{P}:\mathcal{Q})\\
	\leq& \log\int_X\int_X2^{\I(\pi(\alpha):\pi(\beta))}d\mathcal{M}(\alpha)d\mathcal{R}(\beta)\\
	\lea& \log\mathcal{M}(X)\mathcal{R}(X)+\log\int_X\int_X2^{\I(\pi(\alpha):\pi(\beta))}dm(\alpha)dr(\beta)\\
	\eqm& \log\mathcal{M}(X)\mathcal{R}(X)+\mathcal{I}(m:r).
	\end{align*}
	 Let $F_m$ and $F_r$ be the sets of programs that compute $m$ and $r$, respectively. Using Proposition \ref{prp:compgen}, for all $f_m\in F_m$, $f_r\in F_r$.
\begin{align*}
\mathcal{I}(\mathcal{P}:\mathcal{Q})&\lem \log\mathcal{M}(X)\mathcal{R}(X)+\i(\langle f_m\rangle:\langle f_r\rangle).
\end{align*}
	Let $F_M$ and $F_R$ be programs that compute $\mathcal{M}$ and $\mathcal{R}$ and minimize $\i(\langle F_M\rangle:\langle F_R\rangle)$. Let $f'_m\in F_m$ and $f'_r\in F_R$ be programs that compute $m$ and $r$ by first computing $F_M(x)$ then dividing by $F_M(\emptyset)$, and similarly for $F_R$. Thus it must be that $\K(\langle f'_m\rangle|\langle F_M \rangle)=O(1)$ and similarly, $\K(\langle f'_r\rangle|\langle F_R\rangle)=O(1)$. Using Lemma \ref{lmm:consfunc},
	\begin{align*}
\mathcal{I}(\mathcal{P}:\mathcal{Q})\lea& \log\mathcal{M}(X)\mathcal{R}(X)+\i(\langle m\rangle:\langle r\rangle)\\
\lea& \log\mathcal{M}(X)\mathcal{R}(X)+\i(\langle f'_m\rangle:\langle f'_r\rangle)\\
\lea& \log\mathcal{M}(X)\mathcal{R}(X)+\i(\langle F_M\rangle:\langle F_R\rangle)\\
\lea& \log\mathcal{M}(X)\mathcal{R}(X)+\i(\langle \mathcal{M}\rangle:\langle \mathcal{R}\rangle).
	\end{align*}
\end{proof}
\begin{cor}
	For semi measures $p$ and $q$ over $\FS$, if computable measures $w$ and $r$ over $\FS$ have $p\leq w$ and $q\leq r$, then $\ii(p:q)\lea \log w(\FS)r(\FS)+\i(\langle w\rangle:\langle r\rangle)+\K(\ceil{w(\FS)})+\K(\ceil{r(\FS)})$. 
\end{cor}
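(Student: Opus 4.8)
The plan is to reduce the statement to the already-established Proposition~\ref{prp:discomp} by passing from the covers $w,r$ to lower-semicomputable semi-measures obtained by normalization. First I would use the cover hypothesis in its pointwise form: taking the singleton measurable set $\{x\}$ in the definition of cover gives $p(x)\le w(x)$ and $q(y)\le r(y)$ for all $x,y\in\FS$. Since $2^{\i(x:y)}\ge 0$, monotonicity yields $\ii(p:q)=\log\sum_{x,y}2^{\i(x:y)}p(x)q(y)\le\log\sum_{x,y}2^{\i(x:y)}w(x)r(y)$, so it suffices to control the right-hand sum in terms of the covers.

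The key difficulty, and the reason the bound carries the extra $\K(\ceil{w(\FS)})$ and $\K(\ceil{r(\FS)})$ terms, is that the total masses $w(\FS)=\sum_x w(x)$ and $r(\FS)=\sum_y r(y)$ are only lower-semicomputable, not computable: unlike the topological case of the preceding theorem, where the total mass is the single value $F(\emptyset)$ and hence computable, here it is an infinite sum. Dividing by it would produce an upper-semicomputable object to which Proposition~\ref{prp:discomp} does not apply. Instead I would normalize by the integers $N_w=\ceil{w(\FS)}$ and $N_r=\ceil{r(\FS)}$, setting $\tilde m=w/N_w$ and $\tilde s=r/N_r$. Because $w,r$ are computable and $N_w,N_r$ are known integers, $\tilde m,\tilde s$ are lower-semicomputable, and $\sum_x\tilde m(x)=w(\FS)/N_w\le 1$, so they are genuine semi-measures. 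Factoring the scalars out of the sum gives $\ii(p:q)\le\log(N_wN_r)+\ii(\tilde m:\tilde s)$.

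Next I would apply Proposition~\ref{prp:discomp} to the enumerable semi-measures $\tilde m,\tilde s$ to obtain $\ii(\tilde m:\tilde s)\lea\i(\langle\tilde m\rangle:\langle\tilde s\rangle)$, and then peel off the auxiliary integers. Since $\tilde m$ is computed from $\langle w\rangle$ together with $N_w$, and symmetrically for $\tilde s$, two applications of Lemma~\ref{lmm:consfunc} give $\i(\langle\tilde m\rangle:\langle\tilde s\rangle)\lea\i(\langle w,N_w\rangle:\langle r,N_r\rangle)$. The elementary estimate $\i(\langle a,c\rangle:\langle b,d\rangle)\lea\i(a:b)+\K(c)+\K(d)$, which follows from $\K(a,c)\lea\K(a)+\K(c)$ together with $\K(a,b,c,d)\gea\K(a,b)$, then yields $\i(\langle w,N_w\rangle:\langle r,N_r\rangle)\lea\i(\langle w\rangle:\langle r\rangle)+\K(N_w)+\K(N_r)$. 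Combining, $\ii(p:q)\lea\log(N_wN_r)+\i(\langle w\rangle:\langle r\rangle)+\K(\ceil{w(\FS)})+\K(\ceil{r(\FS)})$.

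Finally I would convert $\log N_w=\log\ceil{w(\FS)}$ back to $\log w(\FS)$ using $\ceil{W}\le W+1$, which gives $\log\ceil{W}\le\log W+O(1)$ when $W\ge 1$, and similarly for $r$; this matches the claimed $\log w(\FS)r(\FS)$ term and finishes the argument. The main obstacle is conceptual rather than computational: recognizing that over $\FS$ the total mass is not computable as a real, so only its integer ceiling can be supplied at the bounded cost $\K(\ceil{\cdot})$. This is precisely what distinguishes this corollary from the topological theorem and accounts for the two extra complexity terms; once the normalization by $\ceil{w(\FS)}$ and $\ceil{r(\FS)}$ is in place, the remaining steps are routine applications of Proposition~\ref{prp:discomp} and Lemma~\ref{lmm:consfunc}.
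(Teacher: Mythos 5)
The paper states this corollary without proof, so there is nothing to compare against line by line; your argument is the natural one and almost certainly the intended one. The reduction chain --- pointwise domination, normalization by the integer ceilings $N_w=\ceil{w(\FS)}$ and $N_r=\ceil{r(\FS)}$, Proposition~\ref{prp:discomp} applied to the resulting enumerable semi-measures, and Lemma~\ref{lmm:consfunc} used first to trade $\langle \tilde m\rangle$ for $\langle w,N_w\rangle$ and then to peel off $\K(N_w)+\K(N_r)$ --- is sound, and your diagnosis of why the ceiling terms appear here but not in the preceding theorem (over $\FS$ the total mass $\sum_x w(x)$ is only lower semi-computable, whereas in the topological setting it is the computable value $F(\emptyset)$) is exactly the right observation.

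The one genuine gap is the final step. You establish $\ii(p:q)\lea \log N_wN_r+\i(\langle w\rangle:\langle r\rangle)+\K(N_w)+\K(N_r)$ and then replace $\log N_w$ by $\log w(\FS)+O(1)$, which you correctly qualify with ``when $W\ge 1$'' --- but nothing in the hypotheses forces $w(\FS)\ge 1$. Since $p$ is a semi-measure and $w$ merely dominates it pointwise, $w(\FS)$ can be an arbitrarily small positive real, in which case $\log N_w=0$ while $\log w(\FS)$ is arbitrarily negative, and the substitution fails. In that regime your argument only yields the weaker bound with $\log\ceil{w(\FS)}\ceil{r(\FS)}$ in place of $\log w(\FS)r(\FS)$, and it is not clear the literal statement is reachable by these methods: to exploit a total mass far below $1$ one must normalize by an approximation of $w(\FS)$ itself, which is only lower semi-computable, and the natural repair introduces a term on the order of $\K(\floor{-\log w(\FS)})$ rather than the trivial $\K(\ceil{w(\FS)})=\K(1)$. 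So either restrict to covers with $w(\FS),r(\FS)\ge 1$ (the only regime in which the ceiling terms are non-vacuous, and evidently the intended one) or state the conclusion with the ceilings inside the logarithm; as written, the last substitution does not follow from what precedes it.
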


\begin{exm}$ $\\
	\vspace*{-0.5cm}
	\begin{itemize}
		\item If probability $\mathcal{P}$ over $\R$ has support limited to $(a,b)$ with $\hat{c}\geq \sup_{a\in \R}\mathcal{P}(a)$, then for $\hat{a}\leq a<b\leq \hat{b}$, $\mathcal{I}(\mathcal{P}:\mathcal{P})\lea 2\log \hat{c}(\hat{b}-\hat{a})+\K(\hat{a},\hat{b},\hat{c})$.
		\item If probability $\mathcal{P}$ over $\R$ is less than the weighted combination $\sum_{i=1}^nc_i\mathcal{Q}_i$ of parametric distributions $Q_i$ (such as a Gaussian or a exponential distribution), then $\mathcal{I}(\mathcal{P}:\mathcal{P})\lea 2\log \sum_i^nc_i+\K(\{c_i,\mathcal{Q}_i\}_{i=1}^n)$.
		\item Let $\mathcal{U}(a,b)$ be the uniform measure over $(a,b)$. If an infinite sequence $\alpha\in\IS$ is encoded into a semi-measure of the form $\mathcal{P}(x) = \sum_{n=2}^\infty \alpha_nn^{-2}\mathcal{U}(n,n+1)(x)$ then it is covered by  $\mathcal{M}(x)=\sum_{n=2}^\infty n^{-2}\mathcal{U}(n,n+1)(x)$ which has negligible self-information. However if $\mathcal{P}$ is normalized to a probability measure, it can have arbitrarily high self-information. An example is $\alpha\in\IS$, $\alpha[i]= [i=n]$, where $n\in\N$ is a large random number.
			\end{itemize}
\end{exm}

\section{Averaged Information}
The average information between probability measures is small, less than the complexity of the averaging. This is true in the discrete and continuous case. For the discrete case, an enumerable sequence of uniformly computable probability measures over a general space is a sequence of measures $\{\mu_i\}$ such that $\overline{\mu}_i(x\IS)$ is uniformly computable with respect to $i$, for all $x\in\FS$.
\begin{thr}
	\label{thr:disavg}
	Let $\mathcal{E}=\{\mu_i\}$ be an enumerable sequence of uniformly computable probability measures over a general space. Let $p$ be a computable measure over $\N$. Then $\mathbf{E}_{i,j\sim p}[2^{\mathcal{I}(\mu_i:\mu_j)}]=O(1)$. 
\end{thr}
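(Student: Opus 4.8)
The plan is to collapse the double average into the self\-information of a single mixture measure, and then apply the bound that a computable measure has self\-information no larger than its own complexity. Concretely, I would introduce the mixture $P=\sum_i p(i)\overline{\mu}_i$, a probability measure over $\IS$, and show that
\[
\mathbf{E}_{i,j\sim p}\!\left[2^{\mathcal{I}(\mu_i:\mu_j)}\right]=2^{\I(P:P)}.
\]
Everything then reduces to bounding $\I(P:P)$.

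First I would expand the expectation. Since $2^{\mathcal{I}(\mu_i:\mu_j)}=\int\int 2^{\I(\pi(\alpha):\pi(\beta))}\,d\mu_i(\alpha)\,d\mu_j(\beta)$, Claim \ref{clm}, applied in each variable (using that $(\alpha,\beta)\mapsto 2^{\I(\alpha:\beta)}$ is lower semicontinuous because each $\m(x\mid\cdot)$ is lower semicomputable), rewrites this as $\int_{\IS}\int_{\IS}2^{\I(\alpha:\beta)}\,d\overline{\mu}_i(\alpha)\,d\overline{\mu}_j(\beta)$. Hence
\[
\mathbf{E}_{i,j\sim p}\!\left[2^{\mathcal{I}(\mu_i:\mu_j)}\right]=\sum_{i,j}p(i)p(j)\int_{\IS}\int_{\IS}2^{\I(\alpha:\beta)}\,d\overline{\mu}_i(\alpha)\,d\overline{\mu}_j(\beta).
\]
Because every term is nonnegative, Tonelli's theorem lets me move the sums inside the integrals and recombine them into the mixture, yielding $\int_{\IS}\int_{\IS}2^{\I(\alpha:\beta)}\,dP(\alpha)\,dP(\beta)=2^{\I(P:P)}$.

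Next I would verify that $P$ is a computable measure over $\IS$, so that the self\-information bound applies. Its associated Carath\'eodory function is $F_P(x)=P(x\IS)=\sum_i p(i)\,\overline{\mu}_i(x\IS)$. Since $\mathcal{E}=\{\mu_i\}$ is uniformly computable, each $\overline{\mu}_i(x\IS)$ is uniformly computable in $i$ and $x$; and since $p$ is a computable measure over $\N$, the tail $\sum_{i>N}p(i)$ can be computed to arbitrary precision, bounding the truncation error of the sum. Thus $F_P$ is computable, so $P$ is computable and $\langle P\rangle$ is a fixed object. Applying the claim that a computable probability measure has self\-information bounded by its complexity (the claim following the general\-space transition theorem, in its infinite\-sequence case) gives $\I(P:P)\lea\K(P)$; as $\mathcal{E}$ and $p$ are fixed, $P$ is fixed and $\K(P)=O(1)$, whence $2^{\I(P:P)}=O(1)$.

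The step I expect to require the most care is the identification of the expectation with $2^{\I(P:P)}$: I must justify invoking Claim \ref{clm} on the particular integrand $2^{\I(\alpha:\beta)}$, either by checking joint lower semicontinuity or by applying the claim one variable at a time with the other held fixed, and I must justify the interchange of the countable sums over $i,j$ with the integrals, which is legitimate by Tonelli precisely because the integrand is nonnegative. The computability of the mixture $P$ is the other place where the uniform computability hypothesis on $\mathcal{E}$ is genuinely used, and is the only point where the tail behavior of $p$ must be handled explicitly.
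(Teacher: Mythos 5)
Your proposal is correct and follows essentially the same route as the paper: form the mixture measure, verify it is computable, bound its self-information by its complexity ($O(1)$ since $p$ and $\mathcal{E}$ are fixed), and decompose $2^{\mathcal{I}(\mu:\mu)}$ bilinearly into the expectation. The only cosmetic difference is that you work with the mixture of the dual measures on $\IS$ while the paper mixes the $\mu_i$ on $X$ directly and cites Proposition \ref{prp:compgen}; these coincide via Claim \ref{clm}, and your extra care with Tonelli and the tail of $p$ just fills in steps the paper leaves implicit.
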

\begin{proof}
	The measure $\mu=\sum_ip_i\mu_i$ is computable because for each $\epsilon$, $\overline{\mu}(x)$ can be computed to within $\epsilon$. Thus by Proposition \ref{prp:compgen}, $\mathcal{I}(\mu:\mu)\lea \K(\mu)\lea\K(p,\mathcal{E})=O(1)$. This implies $\mathbf{E}_{i,j\sim P}[2^{\mathcal{I}(\mu_i:\mu_j)}]=O(1)$, because
	\begin{align*}
	2^{\mathcal{I}(\mu:\mu)}
	{=}\int_X\int_X2^{\I(\pi(\alpha):\pi(\beta))}d\mu(\alpha)d\mu(\beta)
	{=}\sum_{i,j}p_ip_j\int_X\int_X2^{\I(\pi(\alpha):\pi(\beta))}d\mu_i(\alpha)d\mu_j(\beta)
	{=}\sum_{i,j}p_ip_j2^{\mathcal{I}(\mu_i:\mu_j)}.
	\end{align*}
\end{proof}
\begin{exm}
	Let $\mathcal{E}$ consist of $2^n$ Gaussians $\mathcal{N}(u,1)$ for $u\in\{1,..,2^n\}$. Let $p$ be the uniform distribution over the first $2^n$ natural numbers. Then by Theorem \ref{thr:disavg}, $\mathbf{E}_{i,j\sim p}[2^{\mathcal{I}(\mathcal{N}(i,1):\mathcal{N}(j,1))}]=O(1)$.
\end{exm}
For the continuous case, we use a random transition between two different measure spaces. This differs from other approaches such as \cite{HoyrupRo09,Gacs21} which defines a metric space of measures. We recall that an random transition from one topology $(X_M,\sigma_M,\mathcal{B}_M,\nu_M)$ to another $(X,\sigma,\mathcal{B},\nu)$ is a measurable function $\Gamma:X_M\times X\rightarrow [0,1]$, such that $\Gamma(\cdot|x_M)$ is a probability measure for each $x_M\in X_M$ and $\Gamma(B|\cdot)$ is a measurable function for all measurable sets $B\in\mathcal{B}$. Thus topology $X_M$ can be seen as a space of probability measures, with each point representing a probability measure. A probability measure $\mathfrak{M}$ over $X_M$ produces an averaged (over $X_M$) probability over $X$, with $\mathcal{P}(\sigma(x))=\int_{X_M}\Gamma(\sigma(x)|\alpha)d\mathfrak{M}(\alpha)$. 
\begin{thr}
	\label{thr:contavg}
	For computable $\mathcal{P}$, $\mathbf{E}_{\alpha,\beta\sim\mathfrak{M}}[2^{\mathcal{I}(\Gamma_\alpha:\Gamma_\beta)}]=O(1)$.
\end{thr}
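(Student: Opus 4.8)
The plan is to mirror the discrete argument of Theorem~\ref{thr:disavg}, replacing the countable mixture $\sum_i p_i\mu_i$ by the continuous mixture $\mathcal{P}=\int_{X_M}\Gamma_\alpha\,d\mathfrak{M}(\alpha)$ induced by the random transition $\Gamma$. The two ingredients are the computability of the averaged measure $\mathcal{P}$, which caps its self-information at $O(1)$, and a Tonelli-type interchange of integration that rewrites $2^{\mathcal{I}(\mathcal{P}:\mathcal{P})}$ as exactly the double expectation over $\mathfrak{M}$ that we wish to bound.

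First I would unfold the definition of $\mathcal{I}(\mathcal{P}:\mathcal{P})$ and substitute the mixture representation of $\mathcal{P}$. Since $\mathcal{P}$ agrees with the $\mathfrak{M}$-average of $\Gamma_\alpha$ on the generating ring $\{\sigma(x)\}$, namely $\overline{\mathcal{P}}(x)=\mathcal{P}(\sigma(x))=\int_{X_M}\Gamma(\sigma(x)|\alpha)\,d\mathfrak{M}(\alpha)$, monotone approximation (Carath\'eodory uniqueness) extends this to $\int_X h\,d\mathcal{P}=\int_{X_M}\left(\int_X h\,d\Gamma_\alpha\right)d\mathfrak{M}(\alpha)$ for every nonnegative measurable $h$. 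Applying this in each coordinate to $h(x,y)=2^{\I(\pi(x):\pi(y))}$ gives
\begin{align*}
2^{\mathcal{I}(\mathcal{P}:\mathcal{P})}
&=\int_X\int_X 2^{\I(\pi(x):\pi(y))}\,d\mathcal{P}(x)\,d\mathcal{P}(y)\\
&=\int_{X_M}\int_{X_M}\left(\int_X\int_X 2^{\I(\pi(x):\pi(y))}\,d\Gamma_\alpha(x)\,d\Gamma_\beta(y)\right)d\mathfrak{M}(\alpha)\,d\mathfrak{M}(\beta)\\
&=\int_{X_M}\int_{X_M} 2^{\mathcal{I}(\Gamma_\alpha:\Gamma_\beta)}\,d\mathfrak{M}(\alpha)\,d\mathfrak{M}(\beta)
=\mathbf{E}_{\alpha,\beta\sim\mathfrak{M}}\!\left[2^{\mathcal{I}(\Gamma_\alpha:\Gamma_\beta)}\right],
\end{align*}
where the bracketed inner integral is precisely $2^{\mathcal{I}(\Gamma_\alpha:\Gamma_\beta)}$ by the definition of information over general topologies applied to the measure pair $\Gamma_\alpha,\Gamma_\beta$.

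It then remains to bound the left-hand side. Since $\mathcal{P}$ is computable by hypothesis, Proposition~\ref{prp:compgen} (equivalently the self-information Claim) yields $\mathcal{I}(\mathcal{P}:\mathcal{P})\lea \K(\mathcal{P})=O(1)$, hence $2^{\mathcal{I}(\mathcal{P}:\mathcal{P})}=O(1)$ and the expectation is $O(1)$, as claimed.

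The hard part will be the measure-theoretic justification of the interchange. The integrand $2^{\I(\pi(x):\pi(y))}$ is nonnegative and in fact lower semicontinuous, since $\I$ is built from the lower semicomputable $\m$, so Tonelli's theorem licenses swapping the $X$- and $X_M$-integrals; what genuinely needs checking is that $(\alpha,\beta)\mapsto 2^{\mathcal{I}(\Gamma_\alpha:\Gamma_\beta)}$ is jointly $\mathfrak{M}\times\mathfrak{M}$-measurable. This follows from the random-transition hypothesis that $\Gamma(B|\cdot)$ is measurable for every Borel $B$, together with approximating the lower semicontinuous integrand from below by simple functions supported on cylinders $\sigma(x)\times\sigma(y)$, exactly as in Claim~\ref{clm}: each approximant is a finite sum of products $\Gamma(\sigma(x)|\alpha)\Gamma(\sigma(y)|\beta)$, hence measurable in $(\alpha,\beta)$, and monotone convergence transfers measurability to the limit.
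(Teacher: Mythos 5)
Your proposal is correct and follows essentially the same route as the paper's own proof: bound $\mathcal{I}(\mathcal{P}:\mathcal{P})\lea\K(\mathcal{P})=O(1)$ via Proposition~\ref{prp:compgen}, then use a Tonelli-type interchange to identify $2^{\mathcal{I}(\mathcal{P}:\mathcal{P})}$ with $\mathbf{E}_{\alpha,\beta\sim\mathfrak{M}}[2^{\mathcal{I}(\Gamma_\alpha:\Gamma_\beta)}]$. The only difference is that you supply the measure-theoretic justification (extension from the generating ring and joint measurability of the integrand) that the paper leaves implicit.
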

\begin{proof}
	For computable $\mathcal{P}$, by Proposition \ref{prp:compgen}, $\mathcal{I}(\mathcal{P}:\mathcal{P})\lea \K(\mathcal{P})=O(1)$. This implies\\ $\mathbf{E}_{\alpha,\beta\sim\mathfrak{M}}[2^{\mathcal{I}(\Gamma_\alpha:\Gamma_\beta)}]=O(1)$ because
	\begin{align*}
	&2^{\mathcal{I}(\mathcal{P}:\mathcal{P})}\\
	=&\int_X\int_X2^{\I(\pi(\alpha):\pi(\beta))}d\mathcal{P}(\alpha)d\mathcal{P}(\beta)\\
	=&\int_X\int_X2^{\I(\pi(\alpha):\pi(\beta))}\left(\int_{X_M}\Gamma(\alpha|\xi)d\mathfrak{M}(\xi)\right)d\alpha \left(\int_{X_M}\Gamma(\beta|\zeta)d\mathfrak{M}(\zeta)\right)d\beta\\
	=&\int_{X_M}\int_{X_M}\left(\int_X\int_X2^{\I(\pi(\alpha):\pi(\beta))}\Gamma(\alpha|\xi)\Gamma(\beta|\zeta)d\alpha d\beta\right)d\mathfrak{M}(\xi)d\mathfrak{M}(\zeta)\\
	=&\int_{X_M}\int_{X_M}2^{\mathcal{I}(\Gamma_\xi:\Gamma_\zeta)}d\mathfrak{M}(\xi)d\mathfrak{M}(\zeta)\\
	=&\mathbf{E}_{\alpha,\beta\sim\mathfrak{M}}\left[2^{\mathcal{I}(\Gamma_\alpha:\Gamma_\beta)}\right].\\
	\end{align*}
\end{proof}
\begin{exm}
	We let $X$ be the real line and the random transition $\Gamma(\cdot|u)$ be a Gaussian $\mathcal{N}(u,1)$ with mean $u$. The space of measures is $X_M=[0,1]$, representing all Gaussians with 1 variance with means between 0 and 1. Then for $\mathfrak{M}\sim\mathcal{U}[0,1]$ being the uniform measure between $0$ and $1$, we have $\mathbf{E}_{a,b\sim\mathcal{U}[0,1]}\left[2^{\mathcal{I}(\mathcal{N}(a,1):\mathcal{N}(b,1))}\right]=O(1)$.
\end{exm}

\section{Quantum Mechanics}
Quantum information theory studies the limits of communicating through quantum channels. This section shows the limitations of the algorithmic content of [ure states and their measurements. Given a measurement apparatus $E$, there is only a tiny fraction of quantum pure states on which $E$’s application produces coherent information. This is independent of the number of measurement outcomes of $E$.

In quantum mechanics, given a quantum state $\ket{\psi}$, a measurement, or POVM, $E$ produces a probability measure $E\ket{\psi}$ over strings. This probability represents the classical information produced from the measurement. 
More formally, a POVM $E$ is a finite set of positive definite matrices $\{E_k\}$ such that $\sum_k E_k = I$. For a given density matrix $\sigma$, a POVM $E$ induces a probability measure over strings, where $E\sigma(k) = \Tr E_k\sigma$. This can be seen as the probability of seeing measurement $k$ given quantum state $\sigma$ and measurement $E$.
 
Given a measurement $E$, for an overwhelming majority of pure quantum states $\ket{\psi}$, the probability produced will have no meaningful information, i.e. $\ii(E\ket{\psi}:E\ket{\psi})$ is negligible. 

\begin{thr}
Let $\Lambda$ be the uniform distribution on the unit sphere of an $n$ qubit space. For the universal Turing machine relativized to an encoding of POVM $E$, $\int 2^{\ii(E\ket{\psi}:E\ket{\psi})}d\Lambda = O(1)$.
\end{thr}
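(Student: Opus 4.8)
The plan is to reduce the integral to a second--moment computation over the Haar measure and then recognize the two resulting terms as self--informations of measures that are computable relative to $E$. Write $d=2^n$ for the dimension. Expanding the definition of $\ii$ and using that every summand is nonnegative (and that a POVM has finitely many elements), I may interchange sum and integral:
\[
\int 2^{\ii(E\ket{\psi}:E\ket{\psi})}\,d\Lambda = \sum_{k,k'}2^{\i(k:k')}\int \langle\psi|E_k|\psi\rangle\,\langle\psi|E_{k'}|\psi\rangle\, d\Lambda(\psi).
\]
The inner integral is the standard second moment of the uniform measure on the unit sphere: from $\int (\ket{\psi}\bra{\psi})^{\otimes 2}\,d\Lambda = (I+\mathbb{F})/(d(d+1))$, with $\mathbb{F}$ the swap operator, it equals $(\Tr(E_k)\Tr(E_{k'})+\Tr(E_kE_{k'}))/(d(d+1))$. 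Thus the integral splits as $S_1+S_2$, where $S_1$ collects the $\Tr(E_k)\Tr(E_{k'})$ contributions and $S_2$ the $\Tr(E_kE_{k'})$ contributions. Everything below is relativized to $E$, so any object computable from $E$ has complexity $O(1)$.

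For $S_1$, set $p_k=\Tr(E_k)/d$. Since $\sum_kE_k=I$ gives $\sum_k\Tr(E_k)=\Tr(I)=d$, this is a genuine probability measure on outcomes; indeed $p=E(I/d)$ is the Born distribution of the maximally mixed state, and it is computable relative to $E$. Then $S_1=\frac{d}{d+1}\sum_{k,k'}2^{\i(k:k')}p_kp_{k'}=\frac{d}{d+1}2^{\ii(p:p)}$, and by Proposition \ref{prp:discomp} relativized to $E$ we have $\ii(p:p)\lea \i(\langle p\rangle:\langle p\rangle)=O(1)$, so $S_1=O(1)$.

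The term $S_2$ is the crux, because $\Tr(E_kE_{k'})/d$ is a genuinely correlated joint distribution on pairs $(k,k')$ whose diagonal can be heavy, so the self--information inequalities — which require a product $p(k)q(k')$ — do not apply to it directly. The step I would use is Cauchy--Schwarz for the Hilbert--Schmidt inner product, $\Tr(E_kE_{k'})\leq \sqrt{\Tr(E_k^2)}\sqrt{\Tr(E_{k'}^2)}$, which factorizes the overlap term into a product. Defining $h_k=\sqrt{\Tr(E_k^2)}/d$, the ordering of Schatten norms $\sqrt{\Tr(E_k^2)}\leq \Tr(E_k)$ (the $\ell^2$ norm of the eigenvalues of $E_k\succeq 0$ is at most their $\ell^1$ norm) gives $\sum_k h_k\leq \sum_k\Tr(E_k)/d=1$. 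Hence $h$ is an enumerable semi-measure computable relative to $E$, and
\[
S_2\leq \frac{1}{d(d+1)}\sum_{k,k'}2^{\i(k:k')}\,d^2\,h_kh_{k'}=\frac{d}{d+1}\,2^{\ii(h:h)}.
\]
By Proposition \ref{prp:discomp} relativized to $E$, $\ii(h:h)\lea \i(\langle h\rangle:\langle h\rangle)=O(1)$, so $S_2=O(1)$ as well, and the integral equals $S_1+S_2=O(1)$.

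The main obstacle is exactly the treatment of $S_2$: the overlap term encodes correlations between outcomes and is not governed by any self--information bound on its own. The key observation is that Cauchy--Schwarz together with the trace-norm inequality $\sum_k\sqrt{\Tr(E_k^2)}\leq d$ (tight for rank-one POVMs) converts it back into the self--information of a computable semi-measure, while the Haar normalization $1/(d(d+1))$ exactly absorbs the factor $d^2$ produced in undoing the factorization. A secondary point to get right is that the relativization to $E$ is essential: it is what forces both $p$ and $h$ to have $O(1)$ complexity uniformly in $n$ and in the number of outcomes, which would otherwise contribute $\K(n)$-type terms that do not vanish.
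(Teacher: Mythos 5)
Your argument is correct, and it takes a genuinely different route from the paper, which does not prove this theorem in-line at all but defers to \cite{EpsteinQuantum21}, whose proof (as the paper itself remarks) goes through upper semi-computable tests. Your proof is instead a direct second-moment computation: the $2$-design identity $\int(\ket{\psi}\bra{\psi})^{\otimes 2}d\Lambda=(I+\mathbb{F})/(d(d+1))$ reduces the integral to the two sums $S_1$ and $S_2$, and each is controlled by conservation applied to a (semi-)measure of complexity $O(1)$ relative to $E$: the Born distribution $p_k=\Tr(E_k)/d$ of the maximally mixed state for $S_1$, and, after Cauchy--Schwarz together with $\sqrt{\Tr(E_k^2)}\leq\Tr(E_k)$, the semi-measure $h_k=\sqrt{\Tr(E_k^2)}/d$ for $S_2$. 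The steps all check out: the swap-trick evaluation of $\int\langle\psi|E_k|\psi\rangle\langle\psi|E_{k'}|\psi\rangle\,d\Lambda$ is standard, the prefactor $d^2/(d(d+1))$ is at most $1$, and the appeal to Proposition \ref{prp:discomp} (equivalently, to the bound $\ii(p:q)\lea\K(p)$ via Lemma \ref{lem:consran}) is legitimate because $p$ and $h$ are computed by fixed programs querying the oracle for $\langle E\rangle$, so the final constant is uniform in $E$, $n$, and the number of outcomes. Your correct identification of $S_2$ as the dangerous term is the heart of the matter --- the diagonal contributions $2^{\K(k|E)}\Tr(E_k^2)$ can individually be as large as order $d$, and it is exactly the factorization via Cauchy--Schwarz plus the $1/(d(d+1))$ normalization that tames them. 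What your approach buys is that the theorem becomes an elementary corollary of the Section 2 machinery, entirely self-contained and free of the ``bizarre'' upper semi-computable tests; moreover $S_1=\frac{d}{d+1}2^{\ii(E(I/d):E(I/d))}$ is an equality, so your decomposition also exhibits a matching lower bound on the integral. What the cited route presumably buys is a pointwise test statement identifying the exceptional states, which an expectation bound alone does not provide --- but for the statement as written your proof suffices.
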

The proof for this theorem can be found in \cite{EpsteinQuantum21}. Its form is rather bizarre, in that it uses \textit{upper semi-computable} tests, which is most likely the only place in the algorithmic information theory literature where this occurs.

Another interesting property of quantum mechanics is that the vast majority of quantum pure states themselves will have negligible algorithmic self information $\I_\mathcal{Q}(\ket{\psi}:\ket{\psi})$. For this definition we use the information term introduced in \cite{Epstein19}. From this reference, we get the following result.
\begin{thr}
Let $\Lambda$ be the uniform distribution on the unit sphere of an $n$ qubit space.\\ $\int 2^{\I_\mathcal{Q}(\ket{\psi}:\ket{\psi})}d\Lambda = O(1)$.
\end{thr}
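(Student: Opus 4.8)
The plan is to turn the Haar integral into a single trace against a universal quantum test and then exploit the fact that the uniform measure $\Lambda$ on pure states is a unitary $2$-design. I would begin by recalling the form of $\I_\mathcal{Q}$ from \cite{Epstein19}: up to the normalization fixed there, the quantity $2^{\I_\mathcal{Q}(\ket{\psi}:\ket{\psi})}$ is the trace $\Tr\big(\tau\,(\ket{\psi}\bra{\psi})^{\otimes 2}\big)$ of a universal lower-semicomputable semi-density operator $\tau$ on the two-copy space $(\C^{2^n})^{\otimes 2}$ paired against two copies of the state. This mirrors the classical situation, where $2^{\mathcal{I}(\mathcal{P}:\mathcal{P})}$ pairs the universal object against two independent copies of $\mathcal{P}$; the presence of two copies is exactly what forces us to control a \emph{second} moment of $\Lambda$ rather than its first.

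With this in hand, the key step is bilinearity. Since the integrand is linear in each of the two tensor factors $\ket{\psi}\bra{\psi}$, I can pull the Haar integral inside the trace,
\[
\int 2^{\I_\mathcal{Q}(\ket{\psi}:\ket{\psi})}\,d\Lambda = \Tr\Big(\tau \int (\ket{\psi}\bra{\psi})^{\otimes 2}\,d\Lambda\Big).
\]
Now I invoke the $2$-design moment identity for the uniform measure, $\int (\ket{\psi}\bra{\psi})^{\otimes 2}\,d\Lambda = \Pi_{\mathrm{sym}}/\binom{d+1}{2}$ with $d=2^n$ and $\Pi_{\mathrm{sym}}$ the projector onto the symmetric subspace of $(\C^{2^n})^{\otimes 2}$. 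The operator $\Pi_{\mathrm{sym}}/\binom{d+1}{2}$ is a computable density matrix, uniformly in $n$, so bounding $\Tr\big(\tau\,\Pi_{\mathrm{sym}}/\binom{d+1}{2}\big)$ is precisely the quantum analogue of bounding the self-information of a computable measure by its complexity, as in Proposition~\ref{prp:compgen} and the averaging arguments of Theorems~\ref{thr:disavg} and \ref{thr:contavg}. The universality (total-mass bound) of $\tau$ then delivers $\int 2^{\I_\mathcal{Q}(\ket{\psi}:\ket{\psi})}\,d\Lambda = O(1)$.

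The main obstacle is bookkeeping of normalization and relativization rather than any single hard inequality. The dimensional factor $\binom{d+1}{2}$ must cancel exactly against the weight that the definition of $\I_\mathcal{Q}$ assigns to the maximally mixed baseline, so that the bound is a genuine constant independent of $n$; and, as with the neighbouring measurement theorem which relativizes the universal machine to $E$, one must relativize to $n$ and check that $\tau$ remains universal and that $\Pi_{\mathrm{sym}}/\binom{d+1}{2}$ stays computable uniformly in $n$ under this relativization. Conceptually, the crux is recognizing that the \emph{diagonal} Haar integral is governed by the symmetric-subspace second moment and not by the weaker first-moment (maximally mixed) average that controls an off-diagonal expectation: by Jensen's inequality the diagonal is the larger of the two, so it cannot be bounded by the double integral alone, and the $2$-design identity is exactly what supplies the missing second-moment control.
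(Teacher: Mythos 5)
You should first be aware of what you are being compared against: the paper does not prove this theorem at all. It states it and defers entirely to \cite{Epstein19} (``From this reference, we get the following result''), so your proposal is necessarily a reconstruction of the external argument rather than an alternative to anything in this manuscript. That said, your skeleton --- write $2^{\I_\mathcal{Q}(\ket{\psi}:\ket{\psi})}$ as $\Tr\bigl(\tau\,(\ket{\psi}\bra{\psi})^{\otimes 2}\bigr)$ for a universal lower-semicomputable positive operator $\tau$ on the two-copy space, pull the Haar integral inside by positivity and linearity, and evaluate the second moment as $\Pi_{\mathrm{sym}}/\binom{d+1}{2}$ --- is the right one for the diagonal average.

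The one step you should not wave at is the last: ``$\Pi_{\mathrm{sym}}/\binom{d+1}{2}$ is a computable density matrix, so universality of $\tau$ finishes it.'' The analogue of Proposition~\ref{prp:compgen} (and of the averaging arguments in Theorems~\ref{thr:disavg} and~\ref{thr:contavg}) bounds the pairing of the universal operator against a \emph{product} $\rho\otimes\sigma$ of computable states; $\Pi_{\mathrm{sym}}/\binom{d+1}{2}$ is a correlated, non-product state on the doubled space, so ``computable, hence complexity $O(1)$, hence bounded'' is not literally an instance of that proposition, and you have not stated the stronger fact about $\tau$ that it would require. The clean way to close the gap is the operator inequality $\Pi_{\mathrm{sym}}/\binom{d+1}{2}\le \tfrac{2}{d^2}I_{d^2}=2\,(I/d)\otimes(I/d)$, which by positivity of $\tau$ gives $\int 2^{\I_\mathcal{Q}(\ket{\psi}:\ket{\psi})}\,d\Lambda \le 2\,\Tr\bigl(\tau\,(I/d)\otimes(I/d)\bigr)=2\cdot 2^{\I_\mathcal{Q}(I/d\,:\,I/d)}$: the diagonal Haar average is at most twice the self-information of the maximally mixed state, which is an elementary product state of complexity $O(1)$ given $n$, and now the product-form universality bound does apply. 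This also disposes of your normalization worry --- no exact cancellation of $\binom{d+1}{2}$ against the definition's baseline is needed, only the factor-of-two comparison between the second moment and the product of first moments. Finally, keep the relativization honest: the $O(1)$ is uniform in $n$ only after relativizing to $n$ (or working in a fixed $n$-qubit space as \cite{Epstein19} does), exactly as the neighbouring measurement theorem relativizes to $E$.
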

It is an open question as to what spaces of measures will have a majority of its members have negligible self information.

\end{document}